\theoremstyle{plain}
\newtheorem{theorem}{Theorem}[section]
\newtheorem{example}{Example}
\title{Binding threshold units with artificial oscillatory neurons}
\author{Vladimir Fanaskov\\
AIRI, Skoltech\\
\texttt{fanaskov.vladimir@gmail.com} \and Ivan Oseledets \\
AIRI, Skoltech
}
\date{}
\begin{document}
\maketitle
\begin{abstract}
Artificial Kuramoto oscillatory neurons were recently introduced as an alternative to threshold units. Empirical evidence suggests that oscillatory units outperform threshold units in several tasks including unsupervised object discovery and certain reasoning problems. The proposed coupling mechanism for these oscillatory neurons is heterogeneous, combining a generalized Kuramoto equation with standard coupling methods used for threshold units. In this research note, we present a theoretical framework that clearly distinguishes oscillatory neurons from threshold units and establishes a coupling mechanism between them. We argue that, from a biological standpoint, oscillatory and threshold units realise distinct aspects of neural coding: roughly, threshold units model intensity of neuron firing, while oscillatory units facilitate information exchange by frequency modulation. To derive interaction between these two types of units, we constrain their dynamics by focusing on dynamical systems that admit Lyapunov functions. For threshold units, this leads to Hopfield associative memory model, and for oscillatory units it yields a specific form of generalized Kuramoto model. The resulting dynamical systems can be naturally coupled to form a Hopfield-Kuramoto associative memory model, which also admits a Lyapunov function. Various forms of coupling are possible. Notably, oscillatory neurons can be employed to implement a low-rank correction to the weight matrix of a Hopfield network. This correction can be viewed either as a form of Hebbian learning or as a popular LoRA method used for fine-tuning of large language models. We demonstrate the practical realization of this particular coupling through illustrative toy experiments.
\end{abstract}

\section{Introduction}
Artificial neural network consists of simplified abstract artificial neurons that follow principles of McCulloch–Pitts model \cite{mcculloch1943logical}. The elementary constituents, artificial neurons or threshold units\footnote{Formally, we consider ``second generation'' neurons with continuous activation functions \cite{maass1997networks}.}, linearly integrate incoming information and apply simple nonlinearity before passing it to the next neuron. In approximation theory such representations are known as superpositions and demonstrate remarkable properties not found in linear approximation schemes \cite{cybenko1989approximation}, \cite{barron1993universal}, \cite{yarotsky2017error}. The threshold unit is not a realistic model of a biological neuron but is well-suited to capture time-averaged neuron interactions in a simplified abstract way \cite{maass1997networks}. Organized into layers, these artificial neurons form hierarchical multilayer networks capable of modeling both feedforward and feedback connections, structures also crucial in biological systems \cite{lindsay2021convolutional}. Despite their simplicity, networks of these units have achieved remarkable state-of-the-art performance across a wide range of cognitive tasks \cite{schmidhuber2015deep}, \cite{lecun2015deep}.

More recently, a novel oscillatory artificial neuron was proposed \cite{miyato2024artificial}, aiming to incorporate the crucial role of oscillatory dynamics observed in biological brain functions \cite{buzsaki2004neuronal}, \cite{duzel2010brain}, \cite{headley2017common}, \cite{van2025processes}, \cite{keller2024spacetime}. Mathematically, a key property of these oscillatory neurons is their capacity for synchronization, often modeled by the Kuramoto equation \cite{kuramoto1984chemical}, \cite{dorfler2014synchronization}. This equation describes how interconnected oscillators tend to synchronize their rhythms by exchanging information based on the oscillation phase differences. This synchronization mechanism allows for information propagation across the network and the formation of dynamically coherent groups. Oscillatory neurons, and related models have already shown promising results in tasks such as object discovery, segmentation, and reasoning \cite{miyato2024artificial}, \cite{lowe2024rotating}, \cite{liboni2025image}.

Dynamics of oscillatory artificial neurons is fairly constrained by the form of generalized Kuramoto equation. To enhance the expressiveness of these units, the work in \cite{miyato2024artificial} introduced several simplifying assumptions. As a result, proposed architecture is a complicated mixture of Kuramoto update and standard deep learning layers which is performant, but lacking theoretical justification and coherence.

In this research note, we present a more theoretically grounded approach to integrate threshold units with oscillatory neurons. Our central argument is that threshold units and oscillatory units model distinct aspects of biological neuronal interaction, compel a clear separation in their dynamical updates. Specifically, threshold units model interactions based on averaged neuronal activity, akin to the Hopfield associative memory model \cite{hopfield1984neurons}, while oscillatory units capture interactions through frequency modulation, following the generalized Kuramoto model \cite{lohe2009non}. To bridge these two types of units, we demonstrate that under specific technical conditions, a generalized Kuramoto model possesses a global energy function, which can be naturally extended to serve as a Lyapunov function for our proposed Hopfield-Kuramoto associative memory model that contain coupling between oscillatory and threshold neurons. We offer several interpretations for the resulting interaction terms, notably they can be understood as a form of ``fast weights'' \cite{ba2016using}, a time-dependent low-rank correction (LoRA) \cite{hu2022lora}, or a Hebbian learning mechanism for the weight matrix of threshold and oscillatory units \cite{hopfield1982neural}, \cite[Definition 2]{albanese2024hebbian}.

The concise list of contribution is as follows:
\begin{enumerate}
	\item We itroduce coupling between artificial oscillatory and threshold units.
	\item To theoretically justify this coupling, a novel Hopfield-Kuramoto associative memory model with a provable global energy function is proposed.
	\item Our work develops a general framework for constructing deep neural networks that integrate both oscillatory and threshold units.
	\item Several interpretations of the derived coupling mechanisms are offered, including connections to fast weights, LoRA, and Hebbian learning.
	\item Simple numerical experiments validate the effectiveness of a low-rank coupling mechanism.
\end{enumerate}

\section*{Notation}
Vectors, vector functions, matrices are in bold font and their components are in regular font. For example, $\boldsymbol{g}(\boldsymbol{x})$ is a vector function of vector argument $\boldsymbol{x}$, $g(\boldsymbol{x})_3$ is its third component; $\boldsymbol{W}_{ij}$ is a block matrix; $\boldsymbol{I}_{N}\in\mathbb{R}^{N\times N}$ is identity matrix, subscript is omitted when dimension is evident from the context; $\boldsymbol{1}$ is identity vector, i.e, $\left(\boldsymbol{1}\right)_i = 1$ for all components.

We use $\dot{f}$ to indicate time derivatives. In all initial-value problems initial conditions are assumed to be given. We omit them when appropriate.

Matrix operations that we use include Hadamard product $\left(\boldsymbol{A}\odot\boldsymbol{B}\right)_{ij} = A_{ij}B_{ij}$, tensor product
\begin{equation*}
	\begin{pmatrix}
		b_{1} & b_{2} \\
	\end{pmatrix}
	\otimes
	\begin{pmatrix}
		a_{11} & a_{12} \\
		a_{21} & a_{22}
	\end{pmatrix} =
	\begin{pmatrix}
		b_{1}a_{11} & b_{1}a_{12} & b_{2}a_{11} & b_{2}a_{12} \\
		b_{1}a_{21} & b_{1}a_{22} & b_{2}a_{21} & b_{2}a_{22}
	\end{pmatrix},
\end{equation*}
Gram matrix $\left(\boldsymbol{\mathcal{G}}(\boldsymbol{a}, \boldsymbol{b})\right)_{ij} = \boldsymbol{a}_i^{\top} \boldsymbol{b}_j$ for two sets of vectors $\boldsymbol{a}_i,i\in1,\dots, N$, $\boldsymbol{b}_j,j\in1,\dots, M$, $\boldsymbol{\mathcal{G}}(\boldsymbol{a}, \boldsymbol{b}) \in \mathbb{R}^{N\times M}$, and trace $\left\|\boldsymbol{A}\right\|_{F} = \sqrt{\sum_{ij} \left(A_{ij}\right)^2}$.

Gradient of scalar function is $\left(\frac{\partial L(\boldsymbol{x})}{\partial \boldsymbol{x}}\right)_i = \frac{\partial L(\boldsymbol{x})}{\partial x_i}$ and Hessian of scalar function reads  $\left(\frac{\partial^2 L(\boldsymbol{x})}{\partial \boldsymbol{x}^2}\right)_{ij} = \frac{\partial^2 L(\boldsymbol{x})}{\partial x_i \partial x_j}$; $\delta_{ij}$ refers to components of identity matrix, $\exp(\boldsymbol{A}) = \sum_{k=0}^{\infty}\boldsymbol{A}^{k}/k!$ is a standard matrix exponent.

\section{Threshold units and artificial oscillatory neurons}
\label{section:Threshold_and_oscillatory}
We start by describing models of threshold units and oscillatory neurons along with their biological interpretations.

The intricate and incompletely understood nature of biological neuron interactions has led to a diverse array of neuron models, each varying in its level of biological plausibility. One particularly convenient biologically realistic model was proposed by Izhikevich in \cite{izhikevich2003simple}. Employing just two ordinary differential equations and an event-based reset condition, this model effectively captures a wide range of biologically relevant neuronal behaviors.. The membrane potential under constant injected DC current predicted by the Izhikevich model is available on the central panel of Figure~\ref{fig:three_types_of_neurons}. The dynamics is clearly non-smooth owing to the presence of a reset mechanism in the model. This inherent non-smoothness is a key challenge in training spiking neural networks\footnote{While specialized training methods exist \cite{eshraghian2023training}, classical deep learning techniques often exhibit superior performance. This motivates the exploration of simplified, smoother models.} A common approach to circumvent this training challenge is to employ simplified models that abstract away certain biological complexities while offering more desirable numerical properties, such as smoother dynamics. The threshold unit is a prime example of such a model.

\begin{figure*}
	\begin{tikzpicture}
		
		\node at (0, 0) (empty_node)  {};
		
		\begin{scope}[shift={(2.5,0.25)}]
			\draw[thick,-] (-1, -0.83) -- (1, -0.83);
			\draw[thick,-] (-0.53, -1) -- (-0.53, 1);
			\draw[scale=0.15, domain=-6.5:6.5, smooth, variable=\x, black, very thick] plot ({\x}, {10 / (1 + exp(-\x)) - 5});
			\filldraw[black] (-0.2, -0.45) circle (2pt);
			\draw[very thick,-] (-0.2, -0.83) -- (-0.2, -0.45);
			\draw[] (-1.35, 0.85) node[right] {$\sigma(x)$};
			\draw[] (0.7, -1.05) node[right] {$x$};
			\filldraw[black] (0.3, 0.57) circle (2pt);
			\draw[very thick,-] (0.3, -0.83) -- (0.3, 0.56);
		\end{scope}
		
		\draw [black, thick, ->] plot [smooth, tension=2] coordinates {(7.5, 0.8) (4.6, 2.0) (2.2, 0.1)};
		\draw [black, thick] plot [smooth, tension=1] coordinates {(5.5, 0.2) (4.3, 0.5) (3.8, 1.5) (3.24, 1.661)};
		\draw [black, thick, ->] plot [smooth, tension=1] coordinates {(5.5, -1.5) (4.3, -1.1) (3.9, 0.2) (3.0, 0.75)}; 
		
		\begin{scope}[shift={(14.2,0.11)}]
			\draw[thick,-] (-1, 0) -- (1, 0);
			\draw[thick,-] (0, -1) -- (0, 1) ;
			\draw[thick](0,0) circle (0.7);
			\draw[very thick] (0, 0) -- (0.4, 0.57);
			\filldraw[black] (0.4, 0.57) circle (2pt);
			\draw[very thick] (0, 0) -- (-0.6, 0.36);
			\filldraw[black] (-0.6, 0.36) circle (2pt);
			\draw[] (0.69, 0.22) node[right] {$\phi$};
		\end{scope}
		
		\draw [black, thick, ->] plot [smooth, tension=2] coordinates {(8.5, 0.5) (12.0, 2.0) (14.7, 0.9)};
		\draw [black, thick, ->] plot [smooth, tension=2] coordinates {(8.5, -1.0) (9.5, -0.1) (13.4, 0.5)};
		
		\node at (14.2, -1.5) (interaction)  {$w_{12}\sin(\phi_{1} - \phi_{2})$};
		
		\node at (2.6, -1.5) (interaction)  {$w_{12}\sigma(x_2)$};
		
		\node at (8.5, 0)  (image) {\includegraphics[width=0.35\textwidth]{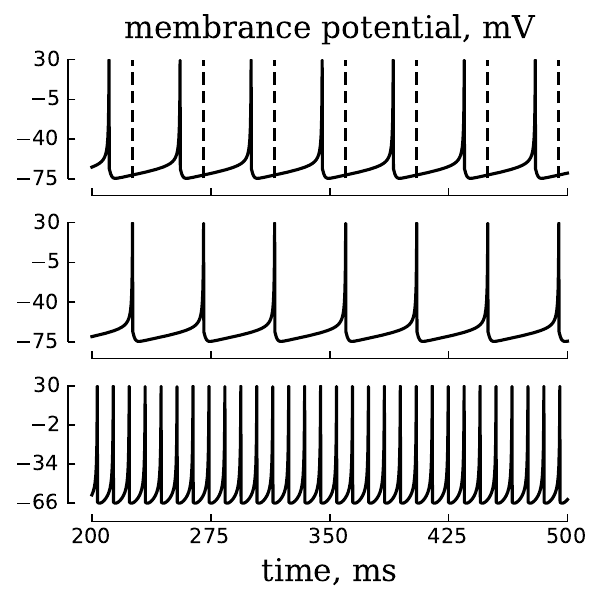}};
	\end{tikzpicture}
	\caption{\textbf{In the middle:} membrane potential of three Izhikevich neurons under constant injected dc-current. Two first neurons fire at the same frequency but have different phases. The third neuron spikes at a higher frequency. \textbf{On the left:} threshold unit with smooth activation function $\sigma(x)$; interaction term of additive model. Threshold unit is a simplified description of neuron's interaction that only models time-averaged intensity of spikes $x$. Phase shift is ignored by the model. \textbf{On the right:} artificial oscillatory neuron with $D=1$; interaction term of Kuramoto model. Artificial oscillatory neuron is a simplified model that describes a neuron as a harmonic oscillator with fixed natural frequency. Interaction of artificial oscillatory neurons depends only on the phase difference. According to \cite{izhikevich1999weakly} oscillatory neurons with different non-resonant frequencies do not interact, and in this sense average intensity of spikes is ignored by the model. See Section~\ref{section:Threshold_and_oscillatory} for discussion.}
	\label{fig:three_types_of_neurons}
\end{figure*}

The McCulloch-Pitts artificial neuron, or threshold unit, represents the time-averaged activity of biological neurons, such as their average firing frequency, through a scalar state variable $x$. The interaction with other artificial neurons is 
\begin{equation}
	\label{eq:threshold_unit_dynamics}
	\dot{x}_{i} = \sum_{j}W_{ij} \sigma\left(x_{j}\right) - x_i,
\end{equation}
where $x_{j}$ are states of contributing neurons, $W_{ij}$ synaptic weights, $b_{i}$ is a bias term and $\sigma$ is nonlinear activation function that models synaptic transmission of signal \cite[Section 1.4.3]{hoppensteadt2012weakly}. Equation~(\ref{eq:threshold_unit_dynamics}) represents a specific instance of an additive model, which has a direct connection to the Hopfield associative memory framework \cite{grossberg1988nonlinear}, \cite{hopfield1984neurons}. The rationale behind the threshold unit is schematically captured in the left panel of Figure~\ref{fig:three_types_of_neurons}. In the context of deep learning, the continuous dynamics of the threshold unit are often approximated by a discrete update rule $\sum_{j}W_{ij} \sigma\left(x_{j}\right)$. While the temporal evolution differs, the fundamental interpretation in terms of weighted input and nonlinear activation remains largely consistent.

However, threshold units inherently fail to capture crucial aspects of neuronal interaction. As illustrated in Figure~\ref{fig:three_types_of_neurons}, they are insensitive to the relative timing, or phase difference, of neuronal spikes. This phase information is critical because even if neurons exhibit high average activity (large $\sigma(x)$,  effective communication can be hindered if presynaptic spikes arrive during the postsynaptic neuron's refractory period. Such phase-dependent interactions, along with other effects related to frequency modulation, compel alternative simplified models. For instance, Izhikevich demonstrated that under certain conditions, the dynamics of periodically firing pulse-coupled neurons can be effectively described by the Kuramoto equation \cite[Theorem 1, Section V.D, equation (17)]{izhikevich1999weakly}. The Kuramoto neuron's state is represented by a phase $\phi_i\in[0, 2\pi]$. In the absence of coupling, each neuron oscillates with its intrinsic frequency $\omega_i$nd their interaction is governed by the Kuramoto equation:
\begin{equation}
	\label{eq:Kuramoto_unit_dynamics}
	\dot{\phi}_{i} = \omega_{i} + \sum_{j}w_{ij}\sin(\phi_{j} - \phi_{i})
\end{equation}
Kuramoto neuron, an example of a more general oscillatory neuron, is graphically described in the right panel of Figure~\ref{fig:three_types_of_neurons}.

The artificial Kuramoto oscillatory neuron, recently introduced in \cite{miyato2024artificial}, represents a natural extension of the standard Kuramoto model (Equation~(\ref{eq:Kuramoto_unit_dynamics})). Its dynamics is based on the generalized Kuramoto equation proposed by Lohe \cite{lohe2009non}:
\begin{equation}
	\label{eq:Kuramoto_oscillatory_neuron dynamics}
	\dot{\boldsymbol{\mu}}_{i} = \boldsymbol{\Omega}_{i}\boldsymbol{\mu}_i + \left(\boldsymbol{I} - \boldsymbol{\mu}_i\boldsymbol{\mu}_i^{\top}\right)\sum_{j}w_{ij}\boldsymbol{\mu}_j,
\end{equation}
where $\boldsymbol{\mu}_i \in \mathbb{R}^{D+1}$, $\boldsymbol{\Omega}_{i}$ is a skew-symmetric matrix. It can be readily observed that $\boldsymbol{\mu}_i^{\top} \boldsymbol{\mu}_i$ remains constant, and given that $\boldsymbol{\Omega}_{i}$ is skew symmetric (describe rotation), $\boldsymbol{\mu}_i$ evolves on the surface of the sphere. Furthermore, with initial conditions satisfying $\boldsymbol{\mu}_i^{\top} \boldsymbol{\mu}_i = 1$ the generalized Kuramoto equation (Equation~(\ref{eq:Kuramoto_oscillatory_neuron dynamics})) simplifies to the standard Kuramoto model (Equation~(\ref{eq:Kuramoto_unit_dynamics})) when $D=1$.

Thus, the generalized Kuramoto model (Equation~(\ref{eq:Kuramoto_oscillatory_neuron dynamics})) encompasses the standard Kuramoto neuron (Equation~(\ref{eq:Kuramoto_unit_dynamics})) as a specific case when $D=1$. Even when $D > 1$ the generalized Kuramoto model retains biological plausibility. Artificial neurons are often interpreted as representing local populations of biological neurons rather than individual cells \cite[Section 1.4.2]{hoppensteadt2012weakly}. While this population averaging is inherent in additive models without altering their form, for oscillatory models, it becomes plausible to represent a neuronal population with an oscillator possessing multiple intrinsic frequencies. In the generalized Kuramoto equation, these frequencies are linked to the eigenvalues of the skew-symmetric matrix $\boldsymbol{\Omega}_{i}$.\footnote{Recall, that skew-symmetric matrix is unitary equivalent to block diagonal matrix with either $1\times 1$ zero block or $2\times2$ block $\begin{pmatrix} 0 & -\lambda_i \\ \lambda_i & 0\end{pmatrix}$. The later one describes rotation with frequency $\lambda$.}

As we have seen, both threshold units and artificial oscillatory neurons find justification from biological principles. Our analysis reveals that they correspond to distinct aspects of neural coding: threshold units primarily model averaged neuronal activity, disregarding temporal phase, while oscillatory neurons specifically capture interactions arising from the phase difference between neural oscillations. The co-existence of both threshold units and oscillatory neurons within a single network holds the promise of capturing a richer repertoire of neural dynamics than either model can achieve in isolation. In the subsequent two sections, we will leverage the general formalism of associative memory to introduce such a coupled model, bridging these distinct modes of neural interaction.

\section{Two memory models}
\label{section:Two memory models}
A canonical way to model associative memory was formulated in \cite{hopfield1984neurons}. One constructs autonomous dynamical system
\begin{equation}
	\label{eq:abstract_associative_memory}
	\dot{\boldsymbol{x}} = \boldsymbol{F}\left(\boldsymbol{x}\right)
\end{equation}
with Lyapunov function $E$ non-increasing on trajectories of dynamical system
\begin{equation}
	\label{eq:energy_property}
	\dot{E}\left(\boldsymbol{x}\right) = \left(\frac{\partial E\left(\boldsymbol{x}\right)}{\partial \boldsymbol{x}}\right)^{\top}\dot{\boldsymbol{x}} =  \left(\frac{\partial E\left(\boldsymbol{x}\right)}{\partial \boldsymbol{x}}\right)^{\top}\boldsymbol{F}\left(\boldsymbol{x}\right)\leq 0.
\end{equation}
For sufficiently good energy functions, condition~(\ref{eq:energy_property}) ensures that trajectories end up at particular steady state $\boldsymbol{x}^{\star}:\boldsymbol{F}\left(\boldsymbol{x}^{\star}\right) = 0$. Each steady state has basin of attraction, the set of initial conditions that converge to $\boldsymbol{x}^{\star}$, $\mathcal{D}_{\boldsymbol{x}^{\star}}=\left\{\boldsymbol{y}:\boldsymbol{x}(0)=\boldsymbol{y},\,\boldsymbol{x}(\infty)=\boldsymbol{x}^{\star},\dot{\boldsymbol{x}} = \boldsymbol{F}\left(\boldsymbol{x}\right)\right\}$.

Dynamical system~(\ref{eq:abstract_associative_memory}) with these properties has a natural interpretation: steady states $\boldsymbol{x}^{\star}$ are stored memories and their number is the memory capacity; attractor $\mathcal{D}_{\boldsymbol{x}^{\star}}$ of steady state $\boldsymbol{x}^{\star}$ is a set of all stimuli that lead to the recall of $\boldsymbol{x}^{\star}$.

Associative memory can also be understood as a productive computation platform. It was applied to many practical machine learning tasks including denoising \cite{salvatori2021associative}, \cite{hoover2024energy}, clustering \cite{maetschke2014characterizing}, \cite{kandali2021new}, \cite{saha2023end}, classification \cite{widrich2020modern}, \cite{ramsauer2020hopfield},  representation learning \cite{hoover2022universal}, \cite{hoover2024energy}.

\subsection{Hopfield associative memory}
\label{subsection:Hopfield}
We consider the continuous Hopfield memory model introduced in \cite{krotov2020large}. More precisely, we use the dense Hopfield layer from \cite{krotov2021hierarchical}. Both constructions are based on the Lagrange function $L\left(\boldsymbol{x}\right)$ that is a scalar function of a state variable whose gradient equals activation function. The dynamical system of the model reads
\begin{equation}
	\label{eq:Hopfield_memory}
	\dot{\boldsymbol{x}} = \boldsymbol{W} \frac{\partial L\left(\boldsymbol{x}\right)}{\partial \boldsymbol{x}} - \boldsymbol{x} + \boldsymbol{b}.
\end{equation}
In \cite{krotov2020large}, \cite{krotov2021hierarchical} authors provide the following formal characteristic of this system.
\begin{theorem}[Lyapunov function for Hopfield-Krotov associative memory]
	\label{th:Lyapunov_Hopfield}
	Let for dynamical system~(\ref{eq:Hopfield_memory}) weight matrix be symmetric $\boldsymbol{W}^{\top} = \boldsymbol{W}$, and Hessian of Lagrange function be positive-semidefinite matrix $\frac{\partial^2 L}{\partial \boldsymbol{x}^2} \geq 0$, $\boldsymbol{g}(\boldsymbol{x}) \equiv \frac{\partial L\left(\boldsymbol{x}\right)}{\partial \boldsymbol{x}}$. Energy function
	\begin{equation}
		\label{eq:Hopfield_energy}
		E_{H}(\boldsymbol{x}) = \left(\boldsymbol{x} - \boldsymbol{b}\right)^{\top}\boldsymbol{g}(\boldsymbol{x})  - L(\boldsymbol{x}) - \frac{1}{2}\boldsymbol{g}(\boldsymbol{x})^\top \boldsymbol{W}\boldsymbol{g}(\boldsymbol{x})
	\end{equation}
	is a Lyapunov function of~(\ref{eq:Hopfield_memory}) since
	\begin{equation}
		\dot{E}_{H}(\boldsymbol{x}) = -\dot{\boldsymbol{x}}^{\top} \frac{\partial^2 L(\boldsymbol{x})}{\partial \boldsymbol{x}^2} \dot{\boldsymbol{x}} \leq 0.
	\end{equation}
\end{theorem}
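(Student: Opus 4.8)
The plan is to verify the claimed identity for $\dot E_H$ by direct differentiation along trajectories of the dynamical system~(\ref{eq:Hopfield_memory}), and then to read off the sign from the positive-semidefiniteness of the Hessian. First I would compute $\frac{\partial E_H}{\partial \boldsymbol{x}}$ term by term. Writing $\boldsymbol{g}(\boldsymbol{x}) = \frac{\partial L}{\partial \boldsymbol{x}}$ and $\boldsymbol{H}(\boldsymbol{x}) = \frac{\partial^2 L}{\partial \boldsymbol{x}^2}$ (so $\boldsymbol{H}$ is symmetric by equality of mixed partials and $\boldsymbol{H} \geq 0$ by hypothesis), the gradient of the first term $(\boldsymbol{x}-\boldsymbol{b})^\top \boldsymbol{g}(\boldsymbol{x})$ is $\boldsymbol{g}(\boldsymbol{x}) + \boldsymbol{H}(\boldsymbol{x})(\boldsymbol{x}-\boldsymbol{b})$; the gradient of $-L(\boldsymbol{x})$ is $-\boldsymbol{g}(\boldsymbol{x})$; and the gradient of $-\tfrac12 \boldsymbol{g}(\boldsymbol{x})^\top \boldsymbol{W}\boldsymbol{g}(\boldsymbol{x})$ is $-\boldsymbol{H}(\boldsymbol{x})\boldsymbol{W}\boldsymbol{g}(\boldsymbol{x})$, where symmetry of $\boldsymbol{W}$ is used to combine the two Jacobian-chain-rule contributions into a single factor of $2$. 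Summing, the $\pm\boldsymbol{g}(\boldsymbol{x})$ terms cancel and
\begin{equation*}
	\frac{\partial E_H}{\partial \boldsymbol{x}} = \boldsymbol{H}(\boldsymbol{x})\bigl((\boldsymbol{x}-\boldsymbol{b}) - \boldsymbol{W}\boldsymbol{g}(\boldsymbol{x})\bigr) = -\,\boldsymbol{H}(\boldsymbol{x})\,\dot{\boldsymbol{x}},
\end{equation*}
the last equality being exactly the right-hand side of~(\ref{eq:Hopfield_memory}) with a sign flip.

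Then by the chain rule~(\ref{eq:energy_property}), $\dot E_H = \left(\frac{\partial E_H}{\partial \boldsymbol{x}}\right)^\top \dot{\boldsymbol{x}} = -\dot{\boldsymbol{x}}^\top \boldsymbol{H}(\boldsymbol{x})^\top \dot{\boldsymbol{x}} = -\dot{\boldsymbol{x}}^\top \boldsymbol{H}(\boldsymbol{x})\dot{\boldsymbol{x}}$, using symmetry of $\boldsymbol{H}$. Since $\boldsymbol{H}(\boldsymbol{x}) = \frac{\partial^2 L}{\partial \boldsymbol{x}^2}\geq 0$ by assumption, this quadratic form is nonnegative, hence $\dot E_H \leq 0$, which is both the displayed identity and the Lyapunov (non-increasing) property.

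The computation is routine once organized this way; the one place to be careful — and what I would flag as the main obstacle — is the differentiation of the quadratic term $-\tfrac12 \boldsymbol{g}^\top \boldsymbol{W}\boldsymbol{g}$, because $\boldsymbol{g}$ depends on $\boldsymbol{x}$ through a nontrivial Jacobian $\boldsymbol{H}$ and one must keep track of both occurrences of $\boldsymbol{g}$. Here the two hypotheses enter essentially: $\boldsymbol{W}^\top = \boldsymbol{W}$ is what lets the two chain-rule terms collapse to $-\boldsymbol{H}\boldsymbol{W}\boldsymbol{g}$ (rather than $-\tfrac12\boldsymbol{H}(\boldsymbol{W}+\boldsymbol{W}^\top)\boldsymbol{g}$, which would not telescope with $\dot{\boldsymbol{x}}$), and $\boldsymbol{H}\geq 0$ is what converts the resulting identity into an inequality. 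I would also note briefly that symmetry of $\boldsymbol{H}$ itself is automatic (it is a Hessian), so no separate assumption is needed for the final step $\dot{\boldsymbol{x}}^\top \boldsymbol{H}^\top \dot{\boldsymbol{x}} = \dot{\boldsymbol{x}}^\top \boldsymbol{H}\dot{\boldsymbol{x}}$. No existence/uniqueness or regularity subtleties arise at the level of this statement, since it is a pointwise identity along any $C^1$ trajectory.
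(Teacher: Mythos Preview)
Your proposal is correct and follows essentially the same approach as the paper's proof in Appendix~\ref{appendix:Lyapunov_Hopfield}: compute $\partial E_H/\partial \boldsymbol{x}$ term by term, observe the $\pm\boldsymbol{g}$ cancellation, recognise the remainder as $-\boldsymbol{H}\dot{\boldsymbol{x}}$, and contract with $\dot{\boldsymbol{x}}$. The only cosmetic difference is that the paper works in index notation (writing the Hessian as $\boldsymbol{\Lambda}$) whereas you work in matrix notation; you also make explicit the role of $\boldsymbol{W}^\top=\boldsymbol{W}$ in collapsing the two chain-rule contributions from the quadratic term, which the paper uses silently.
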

\begin{proof}
Proof is available in \cite{krotov2020large}, \cite{krotov2021hierarchical} and reproduced in Appendix~\ref{appendix:Lyapunov_Hopfield} for convenience.
\end{proof}
So we see that under specified conditions on $L$ and $\boldsymbol{W}$, dynamical system~(\ref{eq:Hopfield_memory}) is suitable for modelling of associative memory in a sense explained in the introduction to this section\footnote{The symmetry of weights is undesirable from biological perspective, but not from the approximation theory standpoint: universal approximation property for networks with symmetric weights is demonstrated in \cite{hu2019exploring}.}.

The construction may look rather abstract, so we provide a concrete example of Lagrange function and weight matrix $\boldsymbol{W}$. The end of this example is twofold: (i) to illustrate the approach used to build deep neural network from shallow implicit model \cite{bai2019deep}, \cite{krotov2021hierarchical}; (ii) to demonstrate that proper choice of Lagrange function may lead to neural networks with familiar activation functions.
\begin{example}[deep ReLU network]
	\label{ex:Hopfield_ReLU}
	For simplicity we will consider a network with three layers. Generalisation on deeper networks is straightforward. We consider $\boldsymbol{W}$, $\boldsymbol{b}$, $\boldsymbol{x}$ with block structure
	\begin{equation*}
		\boldsymbol{W} = 
		\begin{pmatrix}
			0 & \boldsymbol{W}_{12} & 0 \\
			\boldsymbol{W}_{12}^{\top} & 0 & \boldsymbol{W}_{23} \\
			0 & \boldsymbol{W}_{23}^{\top} & 0
		\end{pmatrix},\,
		\boldsymbol{b} = 
		\begin{pmatrix}
			\boldsymbol{b}_1 \\
			\boldsymbol{b}_2 \\
			\boldsymbol{b}_3
		\end{pmatrix},\,
		\boldsymbol{x} = 
		\begin{pmatrix}
			\boldsymbol{x}_1 \\
			\boldsymbol{x}_2 \\
			\boldsymbol{x}_3
		\end{pmatrix}.
	\end{equation*} 
	The Lagrange function that we use has additive structure $L(\boldsymbol{x}) = \frac{1}{2}\sum_{i=1}^{3}\boldsymbol{1}^{\top}\left({\sf ReLU}(\boldsymbol{x}_i)\right)^2$, where $\boldsymbol{1}$ is a vector with all components equal one. For this choice of $\boldsymbol{W}$ and Lagrange function, steady-state reads
	\begin{equation*}
		\begin{split}
			&\boldsymbol{x}_1 = \boldsymbol{W}_{12}{\sf ReLU}(\boldsymbol{x}_2) + \boldsymbol{b}_1,\,\boldsymbol{x}_3 = \boldsymbol{W}_{32}{\sf ReLU}(\boldsymbol{x}_2) + \boldsymbol{b}_3;\\
			&\boldsymbol{x}_2 = \boldsymbol{W}_{21}{\sf ReLU}(\boldsymbol{x}_1) + \boldsymbol{W}_{23}{\sf ReLU}(\boldsymbol{x}_3)  + \boldsymbol{b}_2.
		\end{split}
	\end{equation*}
	So we see that steady-state has a structure of neural network with three layers and symmetric feedback connections.
\end{example}

Techniques demonstrated in this section were used to derive energy function for many interesting architectures including dense associative memory \cite{krotov2020large} modern Hopfield network \cite{ramsauer2020hopfield}, convolutional neural networks \cite{krotov2021hierarchical}, \cite{hoover2022universal}, energy transformer \cite{hoover2024energy}, MLP-mixer \cite{tang2021remark}, diffusion models \cite{ambrogioni2023search}, \cite{pham2024memorization}.

\subsection{Kuramoto associative memory}
\label{subsection:Kuramoto}
There exist many models of associative memory based on synchronisation of oscillators, e.g., \cite{izhikevich1999weakly}, \cite{hoppensteadt1999oscillatory}, \cite{maffezzoni2015oscillator}. A novel model of associative memory presented here is based on a version of generalised Kuramoto model \cite{lohe2009non}, artificial oscillatory neurons \cite{miyato2024artificial} and ideas from \cite{krotov2020large}. Our main motivation is to accommodate oscillatory neurons into the general framework of associative memory to later infer possible couplings between oscillatory neurons and threshold units from the general form of global energy function.

Equations of motion for $N$ coupled oscillator reads
\begin{equation}
	\label{eq:Kuramoto_memory}
	\dot{\boldsymbol{\mu}}_{i} = \boldsymbol{\Omega}_{i}\boldsymbol{\mu}_i - \left(\boldsymbol{I} - \boldsymbol{\mu}_i\boldsymbol{\mu}_i^{\top}\right)\frac{\partial E_{K}}{\partial \boldsymbol{\mu}_i},
\end{equation}
where $E_{K}$ is a potential function analogous to Lagrange function in Hopfield memory~(\ref{eq:Hopfield_memory}). 

Kuramoto model~(\ref{eq:Kuramoto_memory}) also has Lyapunov function under certain technical conditions on $E_{K}$ and $\boldsymbol{\Omega}_{i}$ as explained below.
\begin{theorem}[Lyapunov function for Kuramoto associative memory]
	\label{th:Lyapunov_Kuramoto}
	Let for dynamical system~(\ref{eq:Kuramoto_memory}) potential $E_{K}$ depends on $\boldsymbol{\mu}_i$ only through scalar products $\boldsymbol{\mu}_i^{\top}\boldsymbol{\mu}_j$ and natural frequencies are identical $\boldsymbol{\Omega}_{i} = \boldsymbol{\Omega}$ for all oscillators. Energy function $E_{K}$ is a Lyapunov function since
	\begin{equation}
		\dot{E_{K}} = - \sum_{i}\frac{\partial E_{K}}{\partial \boldsymbol{\mu}_i} \left(\boldsymbol{I} - \boldsymbol{\mu}_i\boldsymbol{\mu}_i^{\top}\right)\frac{\partial E_{K}}{\partial \boldsymbol{\mu}_i} \leq 0.
	\end{equation}
\end{theorem}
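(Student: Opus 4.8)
The plan is to differentiate $E_{K}$ along trajectories of~(\ref{eq:Kuramoto_memory}), substitute the equations of motion, and observe that the result splits into a ``rotation'' contribution that vanishes and a ``descent'' contribution that is manifestly nonpositive. Concretely, by the chain rule $\dot{E}_{K} = \sum_{i}\left(\frac{\partial E_{K}}{\partial\boldsymbol{\mu}_i}\right)^{\top}\dot{\boldsymbol{\mu}}_i$, and inserting~(\ref{eq:Kuramoto_memory}) gives $\dot{E}_{K} = R - \sum_{i}\left(\frac{\partial E_{K}}{\partial\boldsymbol{\mu}_i}\right)^{\top}\left(\boldsymbol{I}-\boldsymbol{\mu}_i\boldsymbol{\mu}_i^{\top}\right)\frac{\partial E_{K}}{\partial\boldsymbol{\mu}_i}$, where $R \equiv \sum_{i}\left(\frac{\partial E_{K}}{\partial\boldsymbol{\mu}_i}\right)^{\top}\boldsymbol{\Omega}\boldsymbol{\mu}_i$ collects the terms coming from the intrinsic rotation.

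For the descent term I would first note that the dynamics conserves each norm: contracting~(\ref{eq:Kuramoto_memory}) with $\boldsymbol{\mu}_i^{\top}$ and using $\boldsymbol{\mu}_i^{\top}\boldsymbol{\Omega}\boldsymbol{\mu}_i = 0$ (skew symmetry) together with $\boldsymbol{\mu}_i^{\top}\left(\boldsymbol{I}-\boldsymbol{\mu}_i\boldsymbol{\mu}_i^{\top}\right) = \boldsymbol{0}$ on the unit sphere shows $\frac{d}{dt}\|\boldsymbol{\mu}_i\|^{2}=0$. Taking the standard normalisation $\|\boldsymbol{\mu}_i\| = 1$, the matrix $\boldsymbol{I}-\boldsymbol{\mu}_i\boldsymbol{\mu}_i^{\top}$ is the orthogonal projector onto $\boldsymbol{\mu}_i^{\perp}$ and hence positive semidefinite, so every summand above is $\geq 0$ and the whole descent term is $\leq 0$.

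The crux is showing $R = 0$, and this is where the hypotheses enter. Writing $E_{K} = \widetilde{E}\bigl(\{\boldsymbol{\mu}_j^{\top}\boldsymbol{\mu}_k\}_{j,k}\bigr)$ and using $\partial(\boldsymbol{\mu}_j^{\top}\boldsymbol{\mu}_k)/\partial\boldsymbol{\mu}_i = \delta_{ij}\boldsymbol{\mu}_k + \delta_{ik}\boldsymbol{\mu}_j$, the chain rule yields $\frac{\partial E_{K}}{\partial\boldsymbol{\mu}_i} = \sum_{k} c_{ik}\,\boldsymbol{\mu}_k$ with scalar coefficients satisfying $c_{ik} = c_{ki}$. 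Hence $R = \sum_{i,k} c_{ik}\,\boldsymbol{\mu}_k^{\top}\boldsymbol{\Omega}\boldsymbol{\mu}_i$; relabelling $i\leftrightarrow k$, using $c_{ik}=c_{ki}$, and invoking $\boldsymbol{\mu}_i^{\top}\boldsymbol{\Omega}\boldsymbol{\mu}_k = -\boldsymbol{\mu}_k^{\top}\boldsymbol{\Omega}\boldsymbol{\mu}_i$ — which requires the single common $\boldsymbol{\Omega}$, i.e.\ the equal-frequency assumption $\boldsymbol{\Omega}_i\equiv\boldsymbol{\Omega}$ — gives $R = -R$, so $R = 0$. Equivalently, $E_{K}$ is invariant under the simultaneous rotation $\boldsymbol{\mu}_i \mapsto \exp(t\boldsymbol{\Omega})\boldsymbol{\mu}_i$, and differentiating this identity at $t=0$ is exactly the statement $R=0$.

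Combining the two, $\dot{E}_{K} = 0 - \sum_{i}\left(\frac{\partial E_{K}}{\partial\boldsymbol{\mu}_i}\right)^{\top}\left(\boldsymbol{I}-\boldsymbol{\mu}_i\boldsymbol{\mu}_i^{\top}\right)\frac{\partial E_{K}}{\partial\boldsymbol{\mu}_i} \leq 0$, which is the claimed identity. I do not expect a serious obstacle: the computation is short, and the only points requiring genuine care are the index bookkeeping in the $R=0$ step and flagging the normalisation $\|\boldsymbol{\mu}_i\|=1$ that makes $\boldsymbol{I}-\boldsymbol{\mu}_i\boldsymbol{\mu}_i^{\top}$ positive semidefinite.
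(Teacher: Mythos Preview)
Your argument is correct and matches the paper's ``direct proof'' essentially line for line: you write $\partial E_{K}/\partial\boldsymbol{\mu}_i = \sum_k c_{ik}\boldsymbol{\mu}_k$ with symmetric $c_{ik}$ (the paper's $\partial E_K/\partial s_{ik}$), kill the rotation term $R$ by pairing the symmetry of $c_{ik}$ against the antisymmetry of $\boldsymbol{\mu}_k^{\top}\boldsymbol{\Omega}\boldsymbol{\mu}_i$, and bound the descent term via positive semidefiniteness of the projector. Your closing remark that $R=0$ is equivalently the infinitesimal statement of invariance of $E_K$ under $\boldsymbol{\mu}_i \mapsto e^{t\boldsymbol{\Omega}}\boldsymbol{\mu}_i$ is exactly the idea behind the paper's alternative proof, which carries out that change of variables globally to reduce~(\ref{eq:Kuramoto_memory}) to a constrained gradient flow in the rotating frame.
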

\begin{proof}
Appendix~\ref{appendix:Lyapunov_Kuramoto}.
\end{proof}
As we show in Appendix~\ref{appendix:Lyapunov_Kuramoto},  changing to rotating frame one can reduce Kuramoto model~(\ref{eq:Kuramoto_memory}) to constrained gradient flow, under conditions of Theorem~\ref{th:Lyapunov_Kuramoto}. Given that Kuramoto associative memory~\ref{eq:Kuramoto_memory} can be used the same way as Hopfield model but with a different information encoding scheme.

The starting points on the energy landscape as well as its local minima depend only on scalar products $\boldsymbol{\mu}_i^{\top}\boldsymbol{\mu}_j$.
Given that, only scalar products $\boldsymbol{\mu}_i^{\top}\boldsymbol{\mu}_j$ should be used to recover memory vectors and also to supply input stimulus. A natural question then is: if one wants to store components of vector $v\in[-1, 1]^{K}$ in scalar products, what is the minimal number of oscillators needed? For $N$ constrained oscillators one has $N(N-1)/2$ scalar products. Unfortunately, these scalar products are not independent. It is easy to see that the worst case is realised when one needs to encode a vector $v$ with all components equal $1$, in which case all oscillators point to the same direction. So, in total, with $N$ oscillators on a sphere we can reliably encode $N-1$ scalars. This is explained in detail in Appendix~\ref{appendix:scalar_products_encoding}.

A special structure of energy $E_{K}$ and Kuramoto model~(\ref{eq:Kuramoto_memory}) makes it unnatural to consider hierarchical models, but Lagrange formalism from \cite{krotov2020large} allows one to construct various nonlinear coupling in dense networks with oscillatory units. Below we provide two examples with ReLU and attention activation functions.

\begin{example}[dense ReLU network]
	\label{ex:Kuramoto_ReLU}
	To define dense ReLU network we consider the following energy and coupling term
	\begin{equation*}
		\begin{split}
			&E_{K} = \sum_{i=1}^{N-1}\sum_{j>i}\left(\frac{1}{2}W_{ij}\left({\sf ReLU}\left(\boldsymbol{\mu}_{i}^{\top}\boldsymbol{\mu}_{j}\right)\right)^2 + b_{ij}\boldsymbol{\mu}_{i}^{\top}\boldsymbol{\mu}_{j}\right),\\
			&\frac{\partial E_{K}}{\partial \boldsymbol{\mu}_{i}} = \sum_{j\neq i}\left(W_{ij}{\sf ReLU}\left(\boldsymbol{\mu}_{i}^{\top}\boldsymbol{\mu}_{j}\right) + b_{ij}\right)\boldsymbol{\mu}_{j}.
		\end{split}
	\end{equation*}
	The form of a coupling is clearly related to the Hopfield network with ReLU activation functions. The main difference is that for Hopfield network degrees of freedom $\boldsymbol{x}$ can be processed by nonlinear function directly, and in the Kuramoto memory model this can be done only with scalar products.
\end{example}

\begin{example}[dense attention network]
	\label{ex:Kuramoto_attention}
	Similarly one can construct attentive interaction
	\begin{equation*}
		\begin{split}
			&E_{K} = \log\left( \sum_{i=1}^{N-1}\sum_{j>i}\exp\left(W_{ij} \boldsymbol{\mu}_{i}^{\top}\boldsymbol{\mu}_{j}\right)\right) + \sum_{i=1}^{N-1}\sum_{j>i} b_{ij} \boldsymbol{\mu}_{i}^{\top}\boldsymbol{\mu}_{j}\\
			&\frac{\partial E_{K}}{\partial \boldsymbol{\mu}_{i}} = \sum_{j\neq i}\left(W_{ij}\frac{\exp\left(W_{ij} \boldsymbol{\mu}_{i}^{\top}\boldsymbol{\mu}_{j}\right)}{\sum_{i=1}^{N-1}\sum_{j>i}\exp\left(W_{ij} \boldsymbol{\mu}_{i}^{\top}\boldsymbol{\mu}_{j}\right)} + b_{ij}\right)\boldsymbol{\mu}_{j}.
		\end{split}
	\end{equation*}
	This model is analogous to Modern Hopfield Network defined in \cite{ramsauer2020hopfield}.
\end{example}

The result we used here relies heavily on the assumption $\boldsymbol{\Omega}_{i} = \boldsymbol{\Omega}$. This means all natural frequencies are the same so frequency synchronisation is trivial. When frequencies are different but do not deviate too much from the mean frequency, the generalised Kuramoto model still converges to synchronisation state \cite{markdahl2021almost}. When frequency terms differ significantly, synchronisation is not possible. This provides an interesting opportunity to drive the system away from local energy minima without remodelling of potential, countering the argument against attractor-based formalism presented in \cite{koch2024biological}.

The assumption $\boldsymbol{\Omega}_{i} = \boldsymbol{\Omega}$ for $D\gg1$ is far less restrictive than $\omega_i = \omega$ in classical Kuramoto model~(\ref{eq:Kuramoto_unit_dynamics}), especially when learning dynamics is considered. When $D\gg1$ we effectively allow each neuron to choose from a large number of potentially different frequencies. Given that, we expect the Kuramoto associative memory model to be more expressive when $D$ increases. This is partially confirmed in Section~\ref{section:Numerics}.

\section{Hopfield-Kuramoto associative memory}
\label{section:Hopfield-Kuramoto}
To define a joint Hopfield-Kuramoto memory model we consider $N$ neurons that have $D$ oscillatory and $1$ scalar degrees of freedom. The equations of motion are those of Hopfield~(\ref{eq:Hopfield_memory}) and Kuramoto~(\ref{eq:Kuramoto_memory}) models but with additional interaction terms
\begin{equation}
	\label{eq:Hopfield_Kuramoto_memory}
	\begin{split}
		&\dot{x_{i}} = \sum_{j}W_{ij} \left(\frac{\partial L\left(\boldsymbol{x}\right)}{\partial \boldsymbol{x}}\right)_{j} - x_i + b_i + \frac{1}{\kappa_{H}}\Delta x_{i}(\boldsymbol{x}, \boldsymbol{\mu}),\\
		&\dot{\boldsymbol{\mu}}_{i} = \boldsymbol{\Omega}_{i}\boldsymbol{\mu}_i + \left(\boldsymbol{I} - \boldsymbol{\mu}_i\boldsymbol{\mu}_i^{\top}\right)\left(-\frac{\partial E_{K}}{\partial \boldsymbol{\mu}_i} + \frac{1}{\kappa_{K}}\delta \boldsymbol{\mu}_{i}(\boldsymbol{x}, \boldsymbol{\mu})\right),\\
	\end{split}
\end{equation}
where we rewrite Hopfield model~(\ref{eq:Hopfield_memory}) in its scalar form to align it better with Kuramoto model, and introduce coupling constants $\kappa_{H},\kappa_{K}>0$ that control interaction strength.

The main difficulty is to introduce coupling terms that still allow us to define Lyapunov function. An appropriate strategy is to modify existing Hopfield and Kuramoto energy functions to introduce interaction. Kuramoto energy function should depend only on scalar products $\boldsymbol{\mu}_{i}^\top\boldsymbol{\mu}_{j}$ and this is the only constraint. On the other hand, the Hopfield energy function~(\ref{eq:Hopfield_energy}) has a fairly restricted form. Part where scalar products $\boldsymbol{\mu}_{i}^\top\boldsymbol{\mu}_{j}$ fit naturally is the term with synaptic weights $W_{ij}$. This suggest us to replace the last term in Hopfield energy~(\ref{eq:Hopfield_energy}) with
\begin{equation*}
	 -  \sum_{ij}W_{ij}\left(\boldsymbol{\mu}_i^{\top}\boldsymbol{\mu}_j\right)\left(\frac{\partial L\left(\boldsymbol{x}\right)}{\partial \boldsymbol{x}}\right)_i \left(\frac{\partial L\left(\boldsymbol{x}\right)}{\partial \boldsymbol{x}}\right)_{j},
\end{equation*}
which corresponds to the multiplication of synaptic weights on scalar products $\left(\boldsymbol{\mu}_i^{\top}\boldsymbol{\mu}_j\right)$. Clearly, these manipulations satisfy constraints of Kuramoto energy since only scalar products are involved. Besides that they also lead to valid Hopfield energy, since weights remain symmetric and other parts of energy function~(\ref{eq:Hopfield_energy}) are not modified.

Following similar lines we derived more general interaction terms with the Lyapunov function presented below.
\begin{theorem}[Lyapunov function for Hopfield-Kuramoto associative memory]
\label{th:Lyapunov_Hopfield_Kuramoto}
Let $\boldsymbol{g}(\boldsymbol{x}) \equiv\frac{\partial L\left(\boldsymbol{x}\right)}{\partial \boldsymbol{x}}$, $G_{ij} = G_{ji}$ and $\chi_{ij} = \chi_{ji}$ be smooth scalar function and Hessian of Lagrange function be positive-semidefinite matrix $\frac{\partial^2 L}{\partial \boldsymbol{x}^2} \geq 0$ and $\boldsymbol{\Omega}_i=\boldsymbol{\Omega}$, $\kappa_{H}$ and $\kappa_{K}$ are positive coupling constants. Energy function 
\begin{equation}
	\label{eq:Hopfield_Kuramoto_energy}
	E_{HK}\left(\boldsymbol{x}, \boldsymbol{\mu}\right) = \kappa_{H}E_{H}\left(\boldsymbol{x}\right) + \kappa_{K} E_{K}\left(\boldsymbol{\mu}\right) \\ - \frac{1}{2}\sum_{ij}G_{ij}\left(g(\boldsymbol{x})_{i} g(\boldsymbol{x})_{j}\right)\chi_{ij}(\boldsymbol{\mu}_{i}^{\top}\boldsymbol{\mu}_{j})
\end{equation}
is Lyapunov function for joint Kuramoto-Hopfield model~(\ref{eq:Hopfield_Kuramoto_memory}) with coupling terms
\begin{equation}
	\label{eq:Hopfield_Kuramoto_coupling}
	\begin{split}
		&\Delta x_{i}(\boldsymbol{x}, \boldsymbol{\mu}) = \sum_{j}G^{'}_{ij}\left(g(\boldsymbol{x})_{i}g(\boldsymbol{x})_{j}\right)\chi_{ij}(\boldsymbol{\mu}_{i}^{\top}\boldsymbol{\mu}_{j})g(\boldsymbol{x})_{j},\\
		&\delta\boldsymbol{\mu}_{i}(\boldsymbol{x}, \boldsymbol{\mu}) =  \sum_{j\neq i}G_{ij}\left(g(\boldsymbol{x})_{i} g(\boldsymbol{x})_{j}\right)\chi^{'}_{ij}(\boldsymbol{\mu}_{i}^{\top}\boldsymbol{\mu}_{j}) \boldsymbol{\mu}_{j},
	\end{split}
\end{equation}
where $G^{'}_{ij}$ and $\chi^{'}_{ij}$ are derivatives of $G_{ij}$ and $\chi_{ij}$. More precisely, under conditions specified
\begin{equation}
	\label{eq:Hopfield_Kuramoto_energy_derivative}
	\dot{E}_{HK}\left(\boldsymbol{x}, \boldsymbol{\mu}\right) = -\kappa_{H}\dot{\boldsymbol{x}}^{\top} \frac{\partial^2 L(\boldsymbol{x})}{\partial \boldsymbol{x}^2} \dot{\boldsymbol{x}} \\- \frac{1}{\kappa_{K}}\sum_{i}\frac{\partial E_{HK}}{\partial \boldsymbol{\mu}_i} \left(\boldsymbol{I} - \boldsymbol{\mu}_i\boldsymbol{\mu}_i^{\top}\right)\frac{\partial E_{HK}}{\partial \boldsymbol{\mu}_i} \leq 0.
\end{equation}
\end{theorem}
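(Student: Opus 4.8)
The plan is to differentiate $E_{HK}$ along the trajectories of~(\ref{eq:Hopfield_Kuramoto_memory}) and show that the coupling terms~(\ref{eq:Hopfield_Kuramoto_coupling}) are exactly those that turn the system into a \emph{gradient-like} flow for $E_{HK}$ in both blocks of variables. Concretely, I would first establish the two identities
\begin{equation*}
\frac{\partial E_{HK}}{\partial\boldsymbol{x}} = -\kappa_{H}\,\frac{\partial^2 L(\boldsymbol{x})}{\partial\boldsymbol{x}^2}\,\dot{\boldsymbol{x}},\qquad \left(\boldsymbol{I}-\boldsymbol{\mu}_i\boldsymbol{\mu}_i^{\top}\right)\left(-\frac{\partial E_{K}}{\partial\boldsymbol{\mu}_i}+\frac{1}{\kappa_{K}}\delta\boldsymbol{\mu}_i\right) = -\frac{1}{\kappa_{K}}\left(\boldsymbol{I}-\boldsymbol{\mu}_i\boldsymbol{\mu}_i^{\top}\right)\frac{\partial E_{HK}}{\partial\boldsymbol{\mu}_i},
\end{equation*}
after which $\dot{E}_{HK} = \sum_i \tfrac{\partial E_{HK}}{\partial x_i}\dot{x}_i + \sum_i\bigl(\tfrac{\partial E_{HK}}{\partial\boldsymbol{\mu}_i}\bigr)^{\top}\dot{\boldsymbol{\mu}}_i$ reduces to~(\ref{eq:Hopfield_Kuramoto_energy_derivative}) once the rotational contribution $\sum_i\bigl(\tfrac{\partial E_{HK}}{\partial\boldsymbol{\mu}_i}\bigr)^{\top}\boldsymbol{\Omega}\boldsymbol{\mu}_i$ is shown to vanish.

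For the first identity I would reuse the computation behind Theorem~\ref{th:Lyapunov_Hopfield} (Appendix~\ref{appendix:Lyapunov_Hopfield}), which gives $\partial E_{H}/\partial\boldsymbol{x} = \tfrac{\partial^2 L}{\partial\boldsymbol{x}^2}\bigl(\boldsymbol{x}-\boldsymbol{b}-\boldsymbol{W}\boldsymbol{g}(\boldsymbol{x})\bigr)$, and then differentiate the coupling term $-\tfrac12\sum_{ij}G_{ij}(g_ig_j)\chi_{ij}(\boldsymbol{\mu}_i^{\top}\boldsymbol{\mu}_j)$ with respect to $x_k$. Using $\partial(g_ig_j)/\partial x_k = g_j\,(\partial^2 L/\partial\boldsymbol{x}^2)_{ik} + g_i\,(\partial^2 L/\partial\boldsymbol{x}^2)_{jk}$ together with the symmetries $G_{ij}=G_{ji}$, $\chi_{ij}=\chi_{ji}$, the two resulting sums coincide after relabelling $i\leftrightarrow j$ and combine into $-\sum_i (\partial^2 L/\partial\boldsymbol{x}^2)_{ik}\,\Delta x_i$, with $\Delta x_i$ precisely as in~(\ref{eq:Hopfield_Kuramoto_coupling}). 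Collecting terms and substituting the definition of $\dot{\boldsymbol{x}}$ from~(\ref{eq:Hopfield_Kuramoto_memory}) yields the first identity, hence $\sum_i\tfrac{\partial E_{HK}}{\partial x_i}\dot{x}_i = -\kappa_{H}\,\dot{\boldsymbol{x}}^{\top}\tfrac{\partial^2 L}{\partial\boldsymbol{x}^2}\dot{\boldsymbol{x}}\le 0$ by positive-semidefiniteness of the Hessian.

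For the $\boldsymbol{\mu}$ block I would differentiate $\kappa_{K}E_{K}$ and the coupling term with respect to $\boldsymbol{\mu}_k$ using $\partial(\boldsymbol{\mu}_i^{\top}\boldsymbol{\mu}_j)/\partial\boldsymbol{\mu}_k = \delta_{ik}\boldsymbol{\mu}_j+\delta_{jk}\boldsymbol{\mu}_i$; the two terms again coincide by the symmetry of $G$ and $\chi$, producing $\partial E_{HK}/\partial\boldsymbol{\mu}_k = \kappa_{K}\,\partial E_{K}/\partial\boldsymbol{\mu}_k - \sum_j G_{kj}(g_kg_j)\chi'_{kj}(\boldsymbol{\mu}_k^{\top}\boldsymbol{\mu}_j)\boldsymbol{\mu}_j$. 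The diagonal $j=k$ summand is proportional to $\boldsymbol{\mu}_k$ (since $\boldsymbol{\mu}_k^{\top}\boldsymbol{\mu}_k=1$), so it lies in the kernel of the projector $\boldsymbol{I}-\boldsymbol{\mu}_k\boldsymbol{\mu}_k^{\top}$ and can be dropped, leaving exactly $-\delta\boldsymbol{\mu}_k$ inside the projector; this is the second identity. The remaining point is the $\boldsymbol{\Omega}$ term: because both $E_{K}$ and the coupling term depend on $\boldsymbol{\mu}$ only through the scalar products $\boldsymbol{\mu}_i^{\top}\boldsymbol{\mu}_j$, one has $\partial E_{HK}/\partial\boldsymbol{\mu}_i = \sum_l a_{il}\boldsymbol{\mu}_l$ with coefficients $a_{il}=a_{li}$ (depending on $\boldsymbol{x}$ and on the scalar products), so $\sum_i(\partial E_{HK}/\partial\boldsymbol{\mu}_i)^{\top}\boldsymbol{\Omega}\boldsymbol{\mu}_i = \sum_{il}a_{il}\,\boldsymbol{\mu}_l^{\top}\boldsymbol{\Omega}\boldsymbol{\mu}_i$ is antisymmetric under $i\leftrightarrow l$ (using $\boldsymbol{\Omega}^{\top}=-\boldsymbol{\Omega}$ and the symmetry of $a$) and therefore vanishes. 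This is the only place the assumption $\boldsymbol{\Omega}_i=\boldsymbol{\Omega}$ enters, and it is the step I expect to require the most care, since it is where the scalar-product structure of the whole energy has to be invoked.

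Finally I would record that the constraint $\boldsymbol{\mu}_i^{\top}\boldsymbol{\mu}_i=1$ is preserved by~(\ref{eq:Hopfield_Kuramoto_memory}) — $\tfrac{d}{dt}\boldsymbol{\mu}_i^{\top}\boldsymbol{\mu}_i = 2\boldsymbol{\mu}_i^{\top}\boldsymbol{\Omega}\boldsymbol{\mu}_i + 2\boldsymbol{\mu}_i^{\top}(\boldsymbol{I}-\boldsymbol{\mu}_i\boldsymbol{\mu}_i^{\top})(\,\cdot\,) = 0$ on $\|\boldsymbol{\mu}_i\|=1$ — so $\boldsymbol{I}-\boldsymbol{\mu}_i\boldsymbol{\mu}_i^{\top}$ is an orthogonal projector, in particular positive semidefinite. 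Combining the two identities with the vanishing of the $\boldsymbol{\Omega}$ term gives $\dot{E}_{HK} = -\kappa_{H}\,\dot{\boldsymbol{x}}^{\top}\tfrac{\partial^2 L}{\partial\boldsymbol{x}^2}\dot{\boldsymbol{x}} - \tfrac{1}{\kappa_{K}}\sum_i(\partial E_{HK}/\partial\boldsymbol{\mu}_i)^{\top}(\boldsymbol{I}-\boldsymbol{\mu}_i\boldsymbol{\mu}_i^{\top})(\partial E_{HK}/\partial\boldsymbol{\mu}_i)$, and both terms are nonpositive, which is~(\ref{eq:Hopfield_Kuramoto_energy_derivative}). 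The routine parts are the two gradient computations and the symmetry bookkeeping; the essential structural input is the scalar-product dependence of $E_{K}$ and of the coupling, which simultaneously makes $E_{HK}$ an admissible Kuramoto potential and annihilates the rotation term.
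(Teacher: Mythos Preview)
Your proposal is correct and follows essentially the same route as the paper's proof in Appendix~\ref{appendix:Hopfield_Kuramoto}: split $E_{HK}$ into the Hopfield, Kuramoto, and coupling pieces, compute $\partial(\Delta E_{HK})/\partial x_i$ and $\partial(\Delta E_{HK})/\partial\boldsymbol{\mu}_i$ via the chain rule and the $G_{ij}=G_{ji}$, $\chi_{ij}=\chi_{ji}$ symmetries to recover $\Delta x_i$ and $\delta\boldsymbol{\mu}_i$, and kill the $\boldsymbol{\Omega}$ term by the symmetric-coefficients/skew-symmetric-form pairing. You are in fact slightly more careful than the paper in two places: you explain why the diagonal summand $j=k$ in $\partial E_{HK}/\partial\boldsymbol{\mu}_k$ can be dropped (it is proportional to $\boldsymbol{\mu}_k$ and annihilated by the projector, reconciling the full sum with the $j\neq i$ in the definition of $\delta\boldsymbol{\mu}_i$), and you verify the invariance of the sphere so that $\boldsymbol{I}-\boldsymbol{\mu}_i\boldsymbol{\mu}_i^{\top}$ is indeed a projector.
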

\begin{proof}
Appendix~\ref{appendix:Hopfield_Kuramoto}.
\end{proof}
Coupling~(\ref{eq:Hopfield_Kuramoto_coupling}) allows for a rich set of models with various control of threshold units by oscillatory units and vice versa. Using formalism of Lagrange functions from \cite{krotov2020large} it is possible to introduce Hopfield-Kuramoto architectures with convolution layers, attention mechanism, normalisation, pooling, etc. Several concrete architectures are available in Appendix~\ref{appendix:Models}. Below we build a quantitative example that shows how oscillatory degrees of freedom can realise multiplexing.
\begin{example}[multiplexing with oscillatory units]
	\label{ex:Multiplexing}
	We will demonstrate that in the joint Hopfield-Kuramoto memory model the same set of threshold units can be reused in two different computations. Which computation is invoked is determined by oscillatory units. Let $G_1$ and $G_2$ be two sets of indices $G_1\cap G_2 = \emptyset$. Neurons from $G_2$ interact with neurons from $G_1$ with synaptic weights $W_{ij}, i\in G_1, j\in G_2$. We select coupling constants $\kappa_{K}\gg 1$ and $\kappa_{H} = 1$. With this choice dynamics for oscillatory units reduced to uncoupled Kuramoto memory~(\ref{eq:Kuramoto_memory}), but threshold units remain coupled to oscillatory units with $\Delta x_i$. 
	
	Suppose we select energy $E_{K}$ in such a way to for input $\boldsymbol{\mu}_{i}(0) = \boldsymbol{\mu}_{i}^{(A)}$ oscillatory neurons from $G_1$ and $G_2$ converges to aligned state $\boldsymbol{\mu}_{i}^{\top}(\infty)\boldsymbol{\mu}_{j}(\infty) = 1$ and for input $\boldsymbol{\mu}_{i}(0) = \boldsymbol{\mu}_{i}^{(B)}$ they converge to the orthogonal state $\boldsymbol{\mu}_{i}^{\top}(\infty)\boldsymbol{\mu}_{j}(\infty) = 0$. Recall that $E_{K}$ is not constrained, so as long as inputs $A$ and $B$ are sufficiently distinct it is possible to construct such energy function.
	
	Choosing $\Delta x_i = \sum_{j}\left(\widetilde{W}_{ij} - W_{ij}\right)\boldsymbol{\mu}_{i}^{\top}\boldsymbol{\mu}_{j}g(\boldsymbol{x})_i$ as correction term we observe that for input $A$ the effect of correction term is the change of weights from $W_{ij}$ to $\widetilde{W}_{ij}$ for interaction of neuron in $G_1$ and $G_2$. This means selected threshold units perform two different functions based on the state of oscillatory neurons. In a similar way multiplexing with more than two functions can be achieved with additional interaction terms.
\end{example}

Multiplexing example suggests that oscillatory neurons can remodel energy landscape of Hopfield associative memory during recall. This means memories stored in the Hopfield-Kuramoto network explicitly depend on the input stimulus. Formally, Hopfield model implement mapping $\boldsymbol{x}(0) \rightarrow \left\{\boldsymbol{x}_1,\dots,\boldsymbol{x}_M\right\}$, where $M$ is memory capacity, i.e., it selects among $M$ stored memories. If we suppose for simplicity that initial state of oscillatory units is a function of $\boldsymbol{x}(0)$, Hopfield-Kuramoto model realises mapping $\boldsymbol{x}(0) \rightarrow \left\{\boldsymbol{x}_1(\boldsymbol{x}(0)),\dots,\boldsymbol{x}_{M(\boldsymbol{x}(0) )}(\boldsymbol{x}(0) )\right\}$ where both capacity and stored memories are functions of $\boldsymbol{x}(0)$. We will see numerical example of memory edit in Section~\ref{section:Numerics}.

\section{Interpretations of binding mechanism}
\label{section:Binding}
While the general form of the coupling mechanism (Equation~(\ref{eq:Hopfield_Kuramoto_coupling})) offers flexibility, its interpretation can be nuanced. To gain clearer insights, and building upon Example~\ref{ex:Multiplexing}, we now analyze specific choices for the interaction terms and explore their connections to established mechanisms in standard deep learning architectures.

Let us first consider a specific choice of interaction functions: $G_{ij}(a) = \widetilde{W}_{ij} a$, $\chi_{ij}(a) = a$, $W_{ij} = 0$, $E_{K} = 0$, $\kappa_{H} = \kappa_{K} = 1$. This particular selection leads to the following Hopfield-Kuramoto model:
\begin{equation}
	\label{eq:low_rank_coupling}
	\begin{split}
		&\dot{x}_{i} = \sum_{j} \left(\boldsymbol{\mu}_{i}^{\top} \boldsymbol{\mu}_{j} \widetilde{W}_{ij}\right) \boldsymbol{g}(\boldsymbol{x})_{j} - x_{i} + b_{i},\\
		&\dot{\boldsymbol{\mu}}_{i} = \boldsymbol{\Omega}_{i}\boldsymbol{\mu}_i + \left(\boldsymbol{I} - \boldsymbol{\mu}_i\boldsymbol{\mu}_i^{\top}\right) \sum_{j}\left(\boldsymbol{g}(\boldsymbol{x})_{i} \boldsymbol{g}(\boldsymbol{x})_{j} \widetilde{W}_{ij} \right) \boldsymbol{\mu}_{j}.
	\end{split}
\end{equation}
Observing the equation for the threshold units, we find that the term $\boldsymbol{\mu}_{i}^{\top} \boldsymbol{\mu}_{j}$ acs as a neuron-specific gating mechanism on the influence of the threshold units $i$ and $j$. This bears resemblance to the gating mechanisms employed in LSTM and GRU networks \cite{hochreiter1997long}, \cite{cho2014properties}. Specifically, when the inner product $\boldsymbol{\mu}_{i}^{\top} \boldsymbol{\mu}_{j}$ is zero, the interaction between threshold units $i$ and $j$ is completely suppressed, even if their individual activities $\boldsymbol{g}(\boldsymbol{x})_{i}$, $\boldsymbol{g}(\boldsymbol{x})_{j}$ are not zero. Biologically, this suppression could be analogous to a scenario where an action potential fails to evoke a response due to the postsynaptic neuron being in its refractory phase. Even if the average activity of connected neurons is high ($g(x) \neq 0$), effective interaction might be blocked if a presynaptic action potential arrives during the postsynaptic neuron's refractory period, a state where the neuron is temporarily less responsive. Conversely, the synchronization of oscillations between two neurons is contingent on their corresponding threshold unit activities being non-zero ($\boldsymbol{g}(\boldsymbol{x})_{i}\neq 0, \boldsymbol{g}(\boldsymbol{x})_{j}\neq 0$), suggesting that the threshold units modulate the oscillatory interactions.

Now, let us consider another choice for the interaction terms: $G_{ij}(a) = a$, $\chi_{ij}(a) = a$, $E_{K} = -\sum_{i > j}\boldsymbol{\mu}_{i}^{\top}  \boldsymbol{\mu}_{j}R_{ij}$, $\kappa_{H} = \kappa_{K} = 1$. This selection yields the following coupled dynamics:
\begin{equation}
	\begin{split}
	&\dot{x}_i = \sum_{j}\left(W_{ij} + \boldsymbol{\mu}_{i}^{\top} \boldsymbol{\mu}_{j}\right)\boldsymbol{g}\left(\boldsymbol{x}\right)_{j} - x_{i} + b_{i},\\
	&\dot{\boldsymbol{\mu}}_{i} = \boldsymbol{\Omega}_{i}\boldsymbol{\mu}_i + \left(\boldsymbol{I} - \boldsymbol{\mu}_i\boldsymbol{\mu}_i^{\top}\right) \sum_{j}\left(R_{ij} + \boldsymbol{g}(\boldsymbol{x})_{i} \boldsymbol{g}(\boldsymbol{x})_{j} \right)\boldsymbol{\mu}_{j}.
	\end{split}
\end{equation}
Here, we observe that the weight matrix $\boldsymbol{R}$ governing the interaction between oscillatory neurons undergoes a rank-$1$ update based on the outer product of the threshold unit activities: $\boldsymbol{R} \leftarrow \boldsymbol{R} + \boldsymbol{g}(\boldsymbol{x})\boldsymbol{g}(\boldsymbol{x})^\top$. Similarly, the weight matrix of threshold units subjects to a rank-$m$ correction, where $m \leq D+1$, determined by the dimensionality of the oscillatory neuron state $\boldsymbol{\mu}_{i} \in \mathbb{R}^{D+1}$. The rank-$m$ correction arises from the term $\boldsymbol{\mu}_{i}^\top \boldsymbol{\mu}_{j}$. Defining a matrix $\boldsymbol{G}$ with entries $G_{ij} \equiv \boldsymbol{\mu}_{i}^{\top} \boldsymbol{\mu}_{j}$ we recognise $\boldsymbol{G}$ as a Gram matrix of vectors $\boldsymbol{\mu}_{i}$. Since $\boldsymbol{G}$ is a Gram matrix formed from $N$ vectors in $\mathbb{R}^{D+1}$, it is positive semi-definite with rank at most $D+1$ (assuming $N>D$). This property implies the existence of a matrix $\boldsymbol{A}\left(\boldsymbol{\mu}\right) \in \mathbb{R}^{N\times k}$ such that $\boldsymbol{G}$ can be factorised as $\boldsymbol{G} = \boldsymbol{A}(\mu)\boldsymbol{A}(\mu)^{\top}$.\footnote{An explicit way to find $\boldsymbol{A}$ is to use Cholesky decomposition.} Thus, this specific form of coupling leads to two low-rank update rules for the weight matrices:
\begin{equation}
	\label{eq:low_rank_corrections}
	\boldsymbol{R} \leftarrow \boldsymbol{R} + \boldsymbol{g}(\boldsymbol{x})\boldsymbol{g}(\boldsymbol{x})^\top,\,\boldsymbol{W} \leftarrow \boldsymbol{W} + \boldsymbol{A}(\mu)\boldsymbol{A}(\mu)^{\top}.
\end{equation}
Interestingly, similar low-rank weight updates are employed in deep learning, notably in the Low-Rank Adaptation (LoRA) method \cite{hu2022lora}, a computationally efficient technique for fine-tuning large language models. A key distinction from standard LoRA is that the factors in our low-rank corrections (Equation~(\ref{eq:low_rank_corrections})) are functions of the network's state ($\boldsymbol{\mu}_i$, $x_i$), i.e., time-dependent. This bears similarity to the concept of ``fast weights,'' which were also proposed as low-rank, time-dependent modifications to weight matrices \cite{ba2016using}, \cite{schmidhuber1993reducing}.

From a biological perspective, the weight updates in Equation~(\ref{eq:low_rank_corrections}) can be interpreted as a form of Hebbian learning, a principle widely used for pattern storage in Hopfield networks \cite{hopfield1982neural}, \cite[Definition 2]{albanese2024hebbian}. In this context, the oscillatory degrees of freedom can potentially contribute up to $D+1$ new patterns to those encoded in the threshold unit weight matrix $\boldsymbol{W}$, while the threshold unit activities can, in turn, influence at most one pattern stored in the oscillatory unit weight matrix $\mathbb{R}$. Given the time-dependent nature of the correction terms in Equation~(\ref{eq:low_rank_corrections}), we anticipate a significant influence on the basins of attraction of the network's dynamics. However, as $\boldsymbol{\mu}_{i}^\top \boldsymbol{\mu}_{j}$ and $\boldsymbol{x}$ are expected to converge o steady-state values, the effective rank of these corrections at equilibrium is expected to be at most $D+1$ and $1$ respectively. Consistent with the multiplexing phenomenon illustrated in Example~\ref{ex:Multiplexing}, these weight corrections exhibit an explicit dependence on the initial conditions of the network.

To empirically validate the implications of this low-rank coupling mechanism (Equation~(\ref{eq:low_rank_corrections})) and the interpretations discussed, we will present results from simple learning problems in Section~\ref{section:Numerics}.

\section{Numerical illustration of coupling mechanism}
\label{section:Numerics}

To illustrate the effect of coupling numerically we select the following model
\begin{equation}
	\label{eq:exp_model}
	\begin{split}
	&\dot{x}_i = \sum_{j}\left(W_{ij} + \kappa\boldsymbol{\mu}_{i}^{\top} \boldsymbol{\mu}_{j}\right)\boldsymbol{\text{ReLU}}\left(\boldsymbol{x}\right)_{j} - x_{i} + b_{i},\\
	&\dot{\boldsymbol{\mu}}_{i} = \left(\boldsymbol{I} - \boldsymbol{\mu}_i\boldsymbol{\mu}_i^{\top}\right) \sum_{j}\left(R_{ij} + \kappa\text{ReLU}(\boldsymbol{x})_{i} \text{ReLU}(\boldsymbol{x})_{j} \right)\boldsymbol{\mu}_{j}\\
	&\quad\,+ \left(\boldsymbol{I} - \boldsymbol{\mu}_i\boldsymbol{\mu}_i^{\top}\right) \sum_{j} S_{ij}\text{ReLU}\left(\boldsymbol{\mu}_{i}^\top \boldsymbol{\mu}_{j}\right) \boldsymbol{\mu}_{j} + \boldsymbol{\Omega}\boldsymbol{\mu}_i ,
	\end{split}
\end{equation}
where both $\boldsymbol{S}$ and $\boldsymbol{R}$ are sparse matrices with sparsity pattern takes from adjacency matrix of Watts-Strogatz small world network with constant probability of edge rewriting $p=0.1$ and variable number of neighbours $k$ \cite{watts1998collective}. Sparse matrices were chosen to reduce memory consumption and a small world network was selected because of its biological plausibility. Model~(\ref{eq:exp_model}) is a particular version of low-rank coupling~(\ref{eq:low_rank_coupling}) with slightly altered $E_{K}$ intended to increase expressivity of subnetwork with oscillatory neurons.

\begin{figure}[t!]
	\centering
	\begin{tikzpicture}
		\node at (0, 0) (empty_node)  {};
		
		\draw[very thick] (0.4,0.87) rectangle (8.0,3.1);
		
		
		\begin{scope}[shift={(5.0, 1.97)}]
			\draw[thick,-] (-1, -0.83) -- (1, -0.83);
			\draw[thick,-] (-0.53, -1) -- (-0.53, 1);
			\draw[scale=0.15, domain=-6.5:6.5, smooth, variable=\x, black, very thick] plot ({\x}, {10 / (1 + exp(-\x)) - 5});
			\draw[] (-1.35, 0.85) node[right] {$\sigma(x)$};
			\filldraw[black] (0.3, 0.57) circle (2pt);
			\draw[very thick,-] (0.3, -0.83) -- (0.3, 0.56);
		\end{scope}
		
		\node at (1.5, 2.0)  (image) {\includegraphics[width=0.12\textwidth,trim={3cm 1.3cm 3cm 1.3cm},clip]{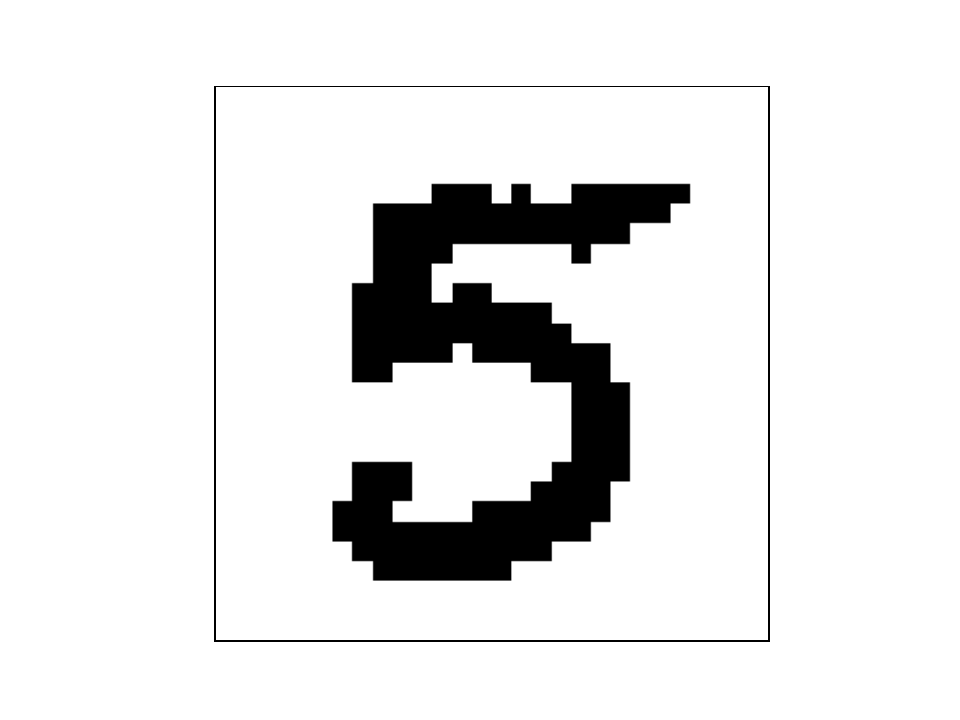}};
		
		\begin{scope}[shift={(5.0,-1.5)}]
			\draw[thick,-] (-1, 0) -- (1, 0);
			\draw[thick,-] (0, -1) -- (0, 1) ;
			\draw[thick](0,0) circle (0.7);
			\draw[very thick] (0, 0) -- (0.4, 0.57);
			\filldraw[black] (0.4, 0.57) circle (2pt);
			\draw[very thick] (0, 0) -- (-0.6, 0.36);
			\filldraw[black] (-0.6, 0.36) circle (2pt);
			\draw[] (0.69, 0.22) node[right] {$\phi$};
		\end{scope}
		
		\node at (1.5, -1.5)  (image) {\includegraphics[width=0.12\textwidth,trim={3cm 1.3cm 3cm 1.3cm},clip]{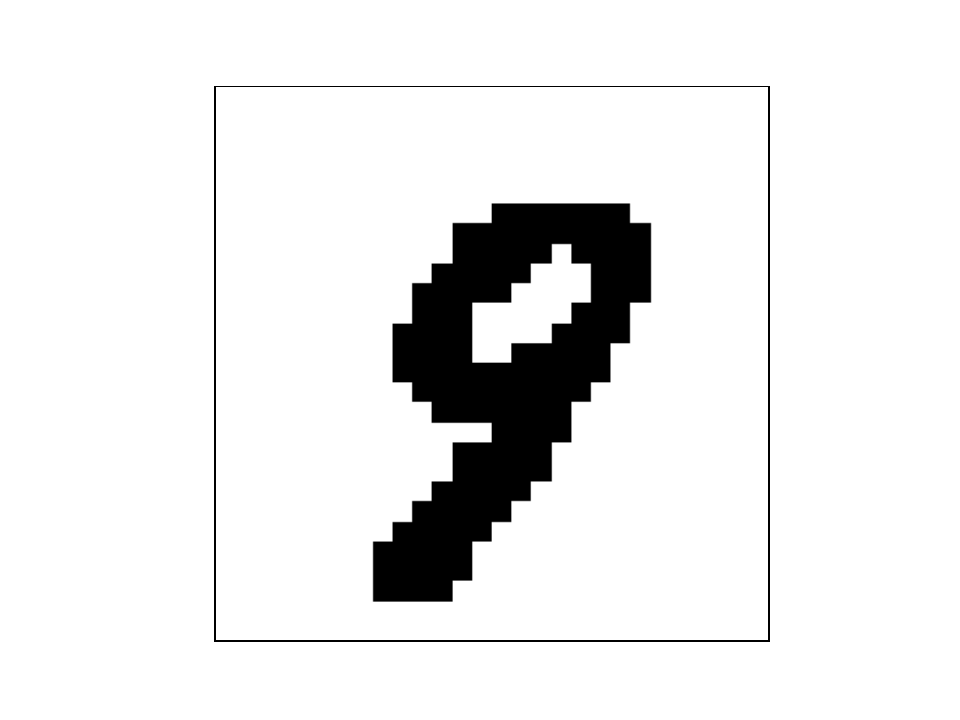}};
		
		\draw[very thick, ->] (2.8, 2.0) -- (3.7, 2.0); 
		\draw[very thick, ->] (2.8, -1.5) -- (3.7, -1.5); 
		\draw[very thick, ->] (6.2, 2.0) -- (7.1, 2.0); 
		\draw [very thick, <<.-.>>] (5.0,-0.2) to (5.0,0.6); 
		
		\node[] at (7.5, 2.0) {\resizebox{0.3cm}{!}{$5$}};
		
		\node at (7.62, 0.4)  (image) {\includegraphics[width=0.04\textwidth,trim={5cm 17cm 5cm 2cm},clip]{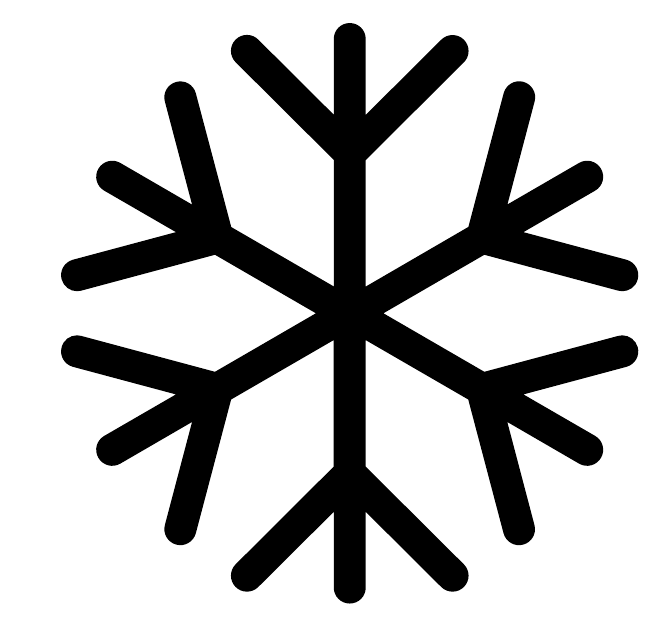}};
	\end{tikzpicture}
	\caption{Empirical validation of interaction between threshold and oscillatory units. \textbf{On the first stage} Hopfield network with threshold units is pre-trained to recognise MNIST digits and all its parameters are frozen. \textbf{On the second stage} oscillatory units are added to the network and the combined network is trained on the modified task. For example, if $1$ is present to oscillatory neurons, Hopfield network should swap labels for $0$ and $1$; if additional input is $0$, the output of Hopfield network is not modified. Learnable parameters are interaction strength (scalar) and weights of oscillatory units.}
	\label{fig:fine_tuning_experiment}
\end{figure}

To study coupling between networks we design a special experiment protocol, illustrated in Figure~\ref{fig:fine_tuning_experiment}. First, we take $\kappa = 0$ and train Hopfield subnetwork to recognise MNIST digits \cite{lecun2010mnist}. After this stage we have trained weights $\boldsymbol{W}$ and bias term $\boldsymbol{b}$ such that starting from MNIST digit $\boldsymbol{x}(0)$, last $10$ elements of $\boldsymbol{x}(1)$ converge to class labels encoded as $+1$ for correct class and $-1$ for incorrect class. Training is done with backpropagation through time and overall the approach closely follows established techniques for learning with associative memory models \cite{krotov2016dense}, \cite{hoover2022universal}. The accuracy of Hopfield network is reported in Table~\ref{table:Hopfield_accuracy}.

\begin{table}[b!]
\caption{Performance of pre-trained Hopfield model in original MNIST dataset and its altered version: swap refers to change of label $0$ to $1$ and vice versa; conflation refers to change of label $1$ to $0$; associative version do the same but for fraction of randomly selected samples.}
\centering
\begin{tabular}{lcc}
\toprule
dataset & train accuracy & test accuracy \\
\midrule
MNIST & $98.2\%$ & $97.2\%$ \\
conflation & $87.1\%$ & $86.0\%$\\
swap & $77.4\%$ & $76.3\%$\\
associative conflation & $92.8\%$ & $91.7\%$\\
associative swap & $88.0\%$ & $86.8\%$\\
\bottomrule
\end{tabular}
\label{table:Hopfield_accuracy}
\end{table}

After the pre-training stage we modify MNIST labels and train a combined network with learnable $\kappa$ and frozen parameters of Hopfield subnetwork. In this fine-tuning stage the only trainable parameters are $\kappa$, and weights of subnetwork with oscillatory units $\boldsymbol{R}$, $\boldsymbol{S}$, $\boldsymbol{\Omega}$. The fine-tuning problem is challenging because information exchange between subnetworks comes from a very restrictive coupling term that has a single learnable parameter $\kappa$. By design of model~(\ref{eq:exp_model}) fine-tuning can be successful only when oscillatory neurons learn to converge to an appropriate low-rank correction~(\ref{eq:low_rank_corrections}) that can adapt weights $\boldsymbol{W}$ of Hopfield network to a changed task.

We consider four modification of MNIST classification problem:
\begin{itemize}
	\vspace{-0.3cm}
	\item\textbf{Swap}. Labels $0$ and $1$ are exchanged. Initial values $\boldsymbol{\mu}_i(0)$ are learnable parameters.
	\item\textbf{Conflation}. Label $1$ is changed to $0$. Initial values $\boldsymbol{\mu}_i(0)$ are learnable parameters.
	\item\textbf{Associative swap}. For each MNIST sample digits we draw random $0$ and $1$ MNIST digits with equal probability. These additional samples are used to initialise oscillatory neurons $\boldsymbol{\mu}_i(0)$. For a given sample, if a randomly selected digit (input to oscillatory units) is $0$ MNIST label is not changed, if it is $1$, labels $0$ and $1$ are exchanged.
	\item\textbf{Associative conflation}. Similar to associative swap, but label $1$ is changed to label $0$ in case additional input is digit $1$.
\end{itemize}
In other words, we exchange labels $0$ and $1$ (swap) or change label $1$ to $0$ (conflation) either for all relevant MNIST samples or only for a fraction, conditioned on the additionally selected digit $\left\{0, 1\right\}$ presented as an input (association) for oscillatory units. The accuracy of the pre-trained Hopfield network on four modified tasks is reported in Table~\ref{table:Hopfield_accuracy}. In general swap is more challenging than conflation and associative setting is harder than non-associative one. Ideally, we want fine-tuning to recover initial test accuracy of $\simeq97\%$.

\begin{table}[t!]
\caption{Results of fine-tuning for non-associative MNIST modifications.}
\centering
\begin{tabular}{llcccc}
\toprule
 & & \multicolumn{2}{c}{conflation} & \multicolumn{2}{c}{swap} \\
$k$ & $D$ & train acc. & test acc. & train acc. & test acc. \\
\midrule
$150$ & $4$ & $96.9\%$ & $96.0\%$ & $88.8\%$ & $88.2\%$\\
$250$ & $4$ & $97.9\%$ & $97.0\%$ & $96.3\%$ & $95.3\%$\\
$350$ & $4$ & $98.0\%$ & $97.1\%$ & $97.9\%$ & $96.8\%$\\
$150$ & $6$ & $97.6\%$ & $96.6\%$ & $89.5\%$ & $88.8\%$\\
$250$ & $6$ & $98.0\%$ & $97.1\%$ & $97.4\%$ & $96.5\%$\\
$350$ & $6$ & $98.1\%$ & $97.2\%$ & $98.0\%$ & $97.0\%$\\
\bottomrule
\end{tabular}
\label{table:non_associative_results}
\end{table}

We train several coupled models with frozen parameters of Hopfield network and $D=2, D=3$ and $k=150, 250, 350, 450$. Recall that $D$ is the number of oscillatory degrees of freedom for each neuron, and the larger $k$ the more trainable parameters are available in sparse matrices $\boldsymbol{R}$, $\boldsymbol{S}$. In general models with larger $D$ and $k$ should be more capable. The results of fine-tuning are available in Table~\ref{table:non_associative_results} and Table~\ref{table:associative_results}.

We can clearly see that accuracy is improved after fine-tuning. Moreover in the best cases initial $97\%$ accuracy is recovered for all tasks but associative swap, where the highest obtained accuracy is $93.6\%$. For non-associative problems we find that $D=6$ uniformly leads to better results than $D=4$. For associative problems we observe no difference between $D=4$ and $D=6$. This is likely because the bottleneck in performance comes from the expressivity of oscillatory subnetwork (controlled by $k$) because it faces more challenging tasks for this class of problems. In general we see that the results confirm that oscillatory neurons can be used to alter the content of Hopfield associative memory.

More details on numerical experiments discussed in this section are available in Appendix~\ref{appendix:Exp_description}.

\begin{table}[b!]
\caption{Results of fine-tuning for associative MNIST modifications.}
\centering
\begin{tabular}{llcccc}
\toprule
 & & \multicolumn{2}{c}{conflation} & \multicolumn{2}{c}{swap} \\
$k$ & $D$ & train acc. & test acc. & train acc. & test acc. \\
\midrule
$150$ & $4, 6$ & $92.7\%$ & $91.5\%$ & $87.8\%$ & $86.6\%$\\
$250$ & $4, 6$ & $94.0\%$ & $92.8\%$ & $88.8\%$ & $87.9\%$\\
$350$ & $4, 6$ & $97.5\%$ & $96.4\%$ & $94.7\%$ & $93.5\%$\\
$450$ & $4, 6$ & $97.9\%$ & $97.0\%$ & $94.8\%$ & $93.6\%$\\
\bottomrule
\end{tabular}
\label{table:associative_results}
\end{table}

\section{Conclusion}
In this research note, we have introduced a novel theoretically grounded coupling mechanism between artificial oscillatory neurons and traditional threshold units. This interaction term is justified based on derived joint Hopfield-Kuramoto associative memory model. Our analysis reveals that this coupling term admits several intriguing interpretations, including a form of low-rank weight correction, a multiplexing of information channels, and a dynamic gating mechanism influencing neuronal interactions. Furthermore, we have experimentally validated the capacity of oscillatory neurons to dynamically modify the stored memories within a Hopfield network through a set of carefully designed simple experiments.

These findings suggest a promising avenue for developing more sophisticated neural network architectures that leveraging the distinct computational strengths of models capturing averaged activity and oscillatory dynamics. Future work could explore the application of this coupling mechanism in deeper network architectures and investigate its potential benefits in addressing complex cognitive tasks where temporal dynamics and synchronization play a crucial role. Further theoretical analysis into the stability and capacity of the proposed Hopfield-Kuramoto model could also yield valuable insights.

\bibliography{threshold_and_oscillatory}
\bibliographystyle{plain}

\newpage
\appendix
\onecolumn
\section{Lyapunov function for Hopfield associative memory}
\label{appendix:Lyapunov_Hopfield}
As we mentioned in the main text the proof of this statement is available in \cite{krotov2020large}, \cite{krotov2021hierarchical}. Nonetheless we find it appropriate to reproduce this proof here because in Appendix~\ref{appendix:Hopfield_Kuramoto} we will need it for a more general result on the joint Hopfield-Kuramoto memory model.

We define $\boldsymbol{g}(\boldsymbol{x}) = \frac{\partial L (\boldsymbol{x})}{\partial \boldsymbol{x}}$, $\boldsymbol{\Lambda} = \frac{\partial^2 L}{\partial \boldsymbol{x}^2}$ and rewrite dynamical system~(\ref{eq:Hopfield_memory}) and energy function~(\ref{eq:Hopfield_energy}) in more explicit form
\begin{equation}
	\begin{split}
	&\dot{x}_i = \sum_{j}W_{ij}g(\boldsymbol{x})_{j} - x_{j} - b_{j},\\
	&E_{H}(\boldsymbol{x}) = \sum_{i}(x_i - b_i)g(\boldsymbol{x})_i - L(\boldsymbol{x}) - \frac{1}{2}\sum_{ij}g(\boldsymbol{x})_i W_{ij}g(\boldsymbol{x})_j.
	\end{split}
\end{equation}
Derivative of energy function reads
\begin{equation}
	\dot{E}_{H}(\boldsymbol{x}) = \sum_{i}\frac{\partial E_{H}(\boldsymbol{x})}{\partial x_{i}} \dot{x}_i,
\end{equation}
so we only need to find partial derivatives with respect to $x_i$
\begin{equation}
	\frac{\partial E_{H}(\boldsymbol{x})}{\partial x_{i}} = g(\boldsymbol{x})_i + \sum_{j}(x - b)_j \Lambda_{ij} - g(\boldsymbol{x})_i - \sum_{jk}\Lambda_{ij}W_{jk}g(\boldsymbol{x})_{k} = -\sum_{j}\Lambda_{ij} \left(\sum_{k}W_{jk}g(\boldsymbol{x})_{k} - x_{j} + b_{j}\right) = -\sum_{j}\Lambda_{ij}\dot{x}_{j}.
\end{equation}
After that we find
\begin{equation}
	\dot{E}_{H}(\boldsymbol{x}) = -\sum_{ij}\Lambda_{ij}\dot{x}_{i}\dot{x}_{j} = - \dot{\boldsymbol{x}}^{\top}\boldsymbol{\Lambda}\dot{\boldsymbol{x}} = -\dot{\boldsymbol{x}}^{\top}\frac{\partial^2 L}{\partial \boldsymbol{x}^2}\dot{\boldsymbol{x}}.
\end{equation}
By assumption, the Hessian of the Lagrange function is a positive semidefinite matrix, so $\dot{E}_{H}(\boldsymbol{x}) \leq 0$.
\section{Lyapunov function for Kuramoto associative memory}
\label{appendix:Lyapunov_Kuramoto}
We provide two distinct proofs.

The first one is based on the change of variables that remove rotation from the equation. It is somewhat longer but illustrates why it is important to have an energy function that depends on scalar products only. Besides, there we show how the Kuramoto model can be reduced to the gradient flow on the sphere.

In the second proof we proceed directly. Here one can see that the special form of energy ensure that $\boldsymbol{\Omega}_i$ vanish if $\boldsymbol{\Omega}_i = \boldsymbol{\Omega}$
\subsection{Proof by change of variables}
Matrix $\exp\left(\boldsymbol{\Omega}_i t\right)$ is orthogonal
\begin{equation}
	\left(\exp\left(\boldsymbol{\Omega}_i t\right)\right)^{\top} = \exp\left(\boldsymbol{\Omega}_i^{\top} t\right) = \exp\left(-\boldsymbol{\Omega}_i t\right) = \left(\exp\left(\boldsymbol{\Omega}_i t\right)\right)^{-1}.
\end{equation}
We use this orthogonal transformation to switch to the rotating frame
\begin{equation}
	\boldsymbol{\mu}_{i} = \exp\left(\boldsymbol{\Omega}_i t\right) \boldsymbol{\nu}_{i},\,\boldsymbol{\nu}_{i} = \exp\left(-\boldsymbol{\Omega}_i t\right)\boldsymbol{\mu}_{i}.
\end{equation}
Observe that when $\boldsymbol{\Omega}_i = \boldsymbol{\Omega}$, scalar products remain the same
\begin{equation}
	\boldsymbol{\mu}_{i}^{\top} \boldsymbol{\mu}_{j} = \boldsymbol{\nu}_{i}^{\top}\exp\left(-\boldsymbol{\Omega} t\right)\exp\left(\boldsymbol{\Omega} t\right) \boldsymbol{\nu}_{j} = \boldsymbol{\nu}_{i}^{\top} \boldsymbol{\nu}_{j},
\end{equation}
which implies $E_{K}(\boldsymbol{\mu}_{i}) = E_{K}(\boldsymbol{\nu}_{i})$ when $\boldsymbol{\Omega}_i = \boldsymbol{\Omega}$ since by assumption it depends only on the scalar products.

To rewrite equations of motion in new variables we observe that
\begin{equation}
	\frac{\partial E_{K}}{\partial \left(\boldsymbol{\mu}_{i}\right)_{\alpha}} = \sum_{\beta} \frac{\partial E_{K}}{\partial \left(\boldsymbol{\nu}_{i}\right)_{\beta}}\frac{\partial \left(\boldsymbol{\nu}_{i}\right)_{\beta}}{\partial \left(\boldsymbol{\mu}_{i}\right)_{\alpha}} = \sum_{\beta} \frac{\partial E_{K}}{\partial \left(\boldsymbol{\nu}_{i}\right)_{\beta}} \left(\exp\left(-\boldsymbol{\Omega}_i t\right)\right)_{\beta\alpha} = \sum_{\beta} \frac{\partial E_{K}}{\partial \left(\boldsymbol{\nu}_{i}\right)_{\beta}} \left(\exp\left(\boldsymbol{\Omega}_i t\right)\right)_{\alpha\beta},
\end{equation}
or in matrix form
\begin{equation}
	\frac{\partial E_{K}}{\partial \boldsymbol{\mu}_{i}} = \exp\left(\boldsymbol{\Omega}_i t\right) \frac{\partial E_{K}}{\partial \boldsymbol{\nu}_{i}}.
\end{equation}
For time derivative we find
\begin{equation}
	\dot{\boldsymbol{\mu}}_i = \boldsymbol{\Omega}_i\exp\left(\boldsymbol{\Omega}_i t\right) \boldsymbol{\nu}_{i} + \exp\left(\boldsymbol{\Omega}_i t\right) \dot{\boldsymbol{\nu}}_{i} = \boldsymbol{\Omega}_i\boldsymbol{\mu}_i + \exp\left(\boldsymbol{\Omega}_i t\right) \dot{\boldsymbol{\nu}}_{i}.
\end{equation}
With that we can rewrite dynamical system~(\ref{eq:Kuramoto_memory}) in new variables
\begin{equation}
	\exp\left(\boldsymbol{\Omega}_i t\right) \dot{\boldsymbol{\nu}}_{i} = -\left(\boldsymbol{I} - \exp\left(\boldsymbol{\Omega}_i t\right)\boldsymbol{\nu}_i \boldsymbol{\nu}_i^{\top} \exp\left(-\boldsymbol{\Omega}_i t\right) \right) \exp\left(\boldsymbol{\Omega}_i t\right) \frac{\partial E_{K}}{\partial \boldsymbol{\nu}_{i}} \Rightarrow \dot{\boldsymbol{\nu}}_{i} =  -\left(\boldsymbol{I} - \boldsymbol{\nu}_i \boldsymbol{\nu}_i^{\top}\right) \frac{\partial E_{K}}{\partial \boldsymbol{\nu}_{i}}.
\end{equation}
So we see that assumption $\boldsymbol{\Omega}_i = \boldsymbol{\Omega}$ reduces Kuramoto memory model to the constrained gradient flow.

Since energy function has the same form in new variables we find
\begin{equation}
	\dot{E}_{K}(\boldsymbol{\nu}) = \sum_{i}\left(\frac{\partial E_{K}(\boldsymbol{\nu})}{\partial \boldsymbol{\nu}_{i}}\right)^{\top} \dot{\boldsymbol{\nu}}_{i} = -\sum_{i} \left(\frac{\partial E_{K}(\boldsymbol{\nu})}{\partial \boldsymbol{\nu}_{i}}\right)^{\top} \left(\boldsymbol{I} - \boldsymbol{\nu}_i \boldsymbol{\nu}_i^{\top}\right) \frac{\partial E_{K}(\boldsymbol{\nu})}{\partial \boldsymbol{\nu}_{i}}.
\end{equation}
Using identity for derivatives and relation between $\boldsymbol{\nu}_i$ and $\boldsymbol{\mu}_i$ we change variables to the original ones
\begin{equation}
	\dot{E}_{K}(\boldsymbol{\mu}) = -\sum_{i} \left(\frac{\partial E_{K}(\boldsymbol{\mu})}{\partial \boldsymbol{\mu}_{i}}\right)^{\top} \left(\boldsymbol{I} - \boldsymbol{\mu}_i \boldsymbol{\mu}_i^{\top}\right) \frac{\partial E_{K}(\boldsymbol{\mu})}{\partial \boldsymbol{\mu}_{i}}.
\end{equation}
Since $\boldsymbol{\mu}_i^{\top}\boldsymbol{\mu}_i = 1$, matrix $\boldsymbol{I} - \boldsymbol{\mu}_i \boldsymbol{\mu}_i^{\top}$ is orthogonal projector which is positive semidefinite matrix. From that we conclude
\begin{equation}
	\dot{E}_{K}(\boldsymbol{\mu}) = -\sum_{i} \left(\frac{\partial E_{K}(\boldsymbol{\mu})}{\partial \boldsymbol{\mu}_{i}}\right)^{\top} \left(\boldsymbol{I} - \boldsymbol{\mu}_i \boldsymbol{\mu}_i^{\top}\right) \frac{\partial E_{K}(\boldsymbol{\mu})}{\partial \boldsymbol{\mu}_{i}}\leq 0.
\end{equation}
\subsection{Direct proof}
We define $s_{\alpha\beta} = \boldsymbol{\mu}_{\alpha}^\top\boldsymbol{\mu}_{\beta}$. Using that energy is function of $s_{\alpha\beta}$ only we find
\begin{equation}
	\frac{\partial E_{K}(\boldsymbol{\mu}) }{\partial \boldsymbol{\mu}_i} = \frac{1}{2}\sum_{\alpha\beta} \frac{\partial E_{K} }{\partial s_{\alpha\beta}}\frac{\partial s_{\alpha\beta}}{\partial \boldsymbol{\mu}_i} = \frac{1}{2}\sum_{\alpha\beta} \frac{\partial E_{K} }{\partial s_{\alpha\beta}}\left(\boldsymbol{\mu}_{\beta}\delta_{i\alpha} + \boldsymbol{\mu}_{\alpha}\delta_{i\beta}\right) = \sum_{\beta}\frac{\partial E_{K}}{\partial s_{i\beta}}\boldsymbol{\mu}_{\beta}.
\end{equation}
This derivative appears in the equations of motion and also in the time derivative of the energy function. With this identity we find
\begin{equation}
	\dot{E}_{K}(\boldsymbol{\mu}) = \sum_{i}\left(\frac{\partial E_{K}(\boldsymbol{\mu})}{\partial \boldsymbol{\mu}_{i}}\right)^{\top} \dot{\boldsymbol{\mu}}_{i} = \sum_{ij} \frac{\partial E_{K}}{\partial s_{ij}}\boldsymbol{\mu}_{j}^{\top}\dot{\boldsymbol{\mu}}_{i} = \sum_{ij} \frac{\partial E_{K}}{\partial s_{ij}} \boldsymbol{\mu}_{j}^{\top}\left(\boldsymbol{\Omega}_{i}\boldsymbol{\mu}_{i} -  \left(\boldsymbol{I} - \boldsymbol{\mu}_i \boldsymbol{\mu}_i^{\top}\right)\sum_{\beta}\frac{\partial E_{K}}{\partial s_{i\beta}}\boldsymbol{\mu}_{\beta}\right).
\end{equation}
The first term with $\boldsymbol{\Omega}_i$ is zero when $\boldsymbol{\Omega}_i = \boldsymbol{\Omega}$ since  $s_{ij} = s_{ji}$ but $\boldsymbol{\mu}_{j}^\top \boldsymbol{\Omega}\boldsymbol{\mu}_{i} = -\boldsymbol{\mu}_{i}^\top \boldsymbol{\Omega}\boldsymbol{\mu}_{j}$. The remaining term gives
\begin{equation}
	\dot{E}_{K}(\boldsymbol{\mu}) = -\sum_{i}\left(\sum_{j}\frac{\partial E_{K}}{\partial s_{ij}} \boldsymbol{\mu}_{j}\right)^{\top}\left(\boldsymbol{I} - \boldsymbol{\mu}_i \boldsymbol{\mu}_i^{\top}\right)\left(\sum_{\beta}\frac{\partial E_{K}}{\partial s_{i\beta}}\boldsymbol{\mu}_{\beta}\right) = -\sum_{i} \left(\frac{\partial E_{K}(\boldsymbol{\mu})}{\partial \boldsymbol{\mu}_{i}}\right)^{\top} \left(\boldsymbol{I} - \boldsymbol{\mu}_i \boldsymbol{\mu}_i^{\top}\right) \frac{\partial E_{K}(\boldsymbol{\mu})}{\partial \boldsymbol{\mu}_{i}}\leq 0.
\end{equation}
\section{Encoding information in scalar products}
\label{appendix:scalar_products_encoding}
As explained in the main text, under conditions of Theorem~\ref{th:Lyapunov_Kuramoto}, Kuramoto model reduces to gradient flow with energy functions that depend only on scalar products. Initial conditions for oscillatory neurons select a particular point on the energy landscape and starting from this point the dynamical system evolves toward the state corresponding to the local energy minimum. It is natural then to encode all information in the scalar products.

Suppose we need to encode $N-1$ real numbers $v_i$ from the interval $[-1, 1]$. Evidently, we can select
\begin{equation}
	\boldsymbol{\mu}_{1} = 
	\begin{pmatrix}
	1\\
	0\\
	\vdots\\
	0	
	\end{pmatrix},
	\boldsymbol{\mu}_{2} = 
	\begin{pmatrix}
	v_1\\
	1 - v_1^2\\
	0\\
	\vdots\\
	0	
	\end{pmatrix},\dots,
	\boldsymbol{\mu}_{N} = 
	\begin{pmatrix}
	v_{N-1}\\
	1 - v_{N-1}^2\\
	0\\
	\vdots\\
	0	
	\end{pmatrix},
\end{equation}
so vectors are normalised $\boldsymbol{\mu}_i^{\top}\boldsymbol{\mu}_{i} = 1$ and scalar products encode required information $\boldsymbol{\mu}_1^{\top}\boldsymbol{\mu}_{i+1} = v_{i}$.

This encoding scheme is valid, but one may suspect it is not very efficient since it does not make use of other components of vectors $v_{i}$ besides the first two. We will argue here that in the worst case this intuition is wrong and $N$ oscillatory units can encode only $N-1$ real numbers from $[-1, 1]$ regardless of the chosen scheme and dimension of vectors $v_{i}$.

To see that consider $\boldsymbol{v} = \boldsymbol{1}$.

First observe that if scalar product of two vectors on the sphere $\boldsymbol{\mu}_{i}^\top \boldsymbol{\mu}_{i} = \boldsymbol{\mu}_{k}^\top \boldsymbol{\mu}_{k} = 1$ equals one $\boldsymbol{\mu}_{k}^\top \boldsymbol{\mu}_{i} = 1$, these vectors  coincide $\boldsymbol{\mu}_{k} = \boldsymbol{\mu}_{i}$. This follows from Cauchy–Schwarz inequality.

This implies if we select two vectors $\boldsymbol{\mu}_{i}$, $\boldsymbol{\mu}_{j}$ and encode $1$ in their scalar product, we lose $D$ degrees of freedom.

Any coding strategy that encode components of $\boldsymbol{v}$ into scalar products is defined by the order in which pair of vectors is selected $\boldsymbol{\mu}_{i_1}^{\top}\boldsymbol{\mu}_{j_1}^{\top} = v_{1}$, $\boldsymbol{\mu}_{i_2}^{\top}\boldsymbol{\mu}_{j_2}^{\top} = v_{2}$, $\dots$. The total number of degrees of freedom that we have equals $(N-1)\times D$ not $N\times D$, because scalar products and dynamical systems are invariant under global rotation. Since each time we pick a novel pair $i_k\neq j_k$ we lose $D$ degrees of freedom, we can encode at most $N-1$ components.

The worst case result is somewhat frustrating, but we do not expect it to happen too often especially if natural data is considered. Given that, it is an interesting question what happens with ``encoding capacity'' in the typical case.
\section{Lyapunov function for Hopfield-Kuramoto associative memory}
\label{appendix:Hopfield_Kuramoto}
First, we split energy function~(\ref{eq:Hopfield_Kuramoto_energy}) of joint Hopfield-Kuramoto model on three terms
\begin{equation}
	\begin{split}
	&E_{HK}\left(\boldsymbol{x}, \boldsymbol{\mu}\right) = \kappa_{H}E_{H}\left(\boldsymbol{x}\right) + \kappa_{K} E_{K}\left(\boldsymbol{\mu}\right) + \Delta E_{HK}\left(\boldsymbol{x}, \boldsymbol{\mu}\right),\\
	&\Delta E_{HK}\left(\boldsymbol{x}, \boldsymbol{\mu}\right) =  - \frac{1}{2}\sum_{ij}G_{ij}\left(g(\boldsymbol{x})_{i} g(\boldsymbol{x})_{j}\right)\chi_{ij}(\boldsymbol{\mu}_{i}^{\top}\boldsymbol{\mu}_{j}).
	\end{split}
\end{equation}
Time derivative of energy function reads
\begin{equation}
	\begin{split}
		&\dot{E}_{HK}\left(\boldsymbol{x}, \boldsymbol{\mu}\right) = \sum_{i}\frac{\partial E_{HK}\left(\boldsymbol{x}, \boldsymbol{\mu}\right)}{\partial x_i}\dot{x}_i + \sum_{i}\left(\frac{\partial E_{HK}\left(\boldsymbol{x}, \boldsymbol{\mu}\right)}{\partial \boldsymbol{\mu}_i}\right)^{\top}\dot{\mu}_i\\
		&=\kappa_{H}\sum_{i}\underbrace{\frac{\partial E_{H}\left(\boldsymbol{x}\right)}{\partial x_i}}_{\text{Appendix \ref{appendix:Lyapunov_Hopfield}}}\dot{x}_i +  \kappa_{K}\sum_{i}\underbrace{\left(\frac{\partial E_{K}\left(\boldsymbol{\mu}\right)}{\partial \boldsymbol{\mu}_i}\right)^{\top}}_{\text{Appendix \ref{appendix:Lyapunov_Kuramoto}}}\dot{\mu}_i + \sum_{i}\frac{\partial \Delta E_{HK}\left(\boldsymbol{x}, \boldsymbol{\mu}\right)}{\partial x_i}\dot{x}_i + \sum_{i}\left(\frac{\partial \Delta E_{HK}\left(\boldsymbol{x}, \boldsymbol{\mu}\right)}{\partial \boldsymbol{\mu}_i}\right)^{\top}\dot{\mu}_i.
	\end{split}
\end{equation}
Since joint model significantly reuses equations of motion of uncoupled models, we can borrow most of the results from Appendix~\ref{appendix:Lyapunov_Hopfield} and Appendix~\ref{appendix:Lyapunov_Kuramoto}. The last two terms in the equation above are new, so we focus on them. Hopfield and Kuramoto parts can be considered separately.

For the Hopfield model we obtain
\begin{multline}
	\frac{\partial \Delta E_{HK}\left(\boldsymbol{x}, \boldsymbol{\mu}\right)}{\partial x_i} = - \frac{1}{2}\sum_{\alpha\beta}G^{'}_{\alpha\beta}\left(g(\boldsymbol{x})_{\alpha} g(\boldsymbol{x})_{\beta}\right)\chi_{\alpha\beta}(\boldsymbol{\mu}_{\alpha}^{\top}\boldsymbol{\mu}_{\beta}) \frac{\partial \left(g(\boldsymbol{x})_{\alpha} g(\boldsymbol{x})_{\beta}\right)}{\partial x_i} \\
	= - \frac{1}{2}\sum_{\alpha\beta}G^{'}_{\alpha\beta}\left(g(\boldsymbol{x})_{\alpha} g(\boldsymbol{x})_{\beta}\right)\chi_{\alpha\beta}(\boldsymbol{\mu}_{\alpha}^{\top}\boldsymbol{\mu}_{\beta})\left(\Lambda_{i\alpha}g(\boldsymbol{x})_{\beta} + \Lambda_{i\beta}g(\boldsymbol{x})_{\alpha}\right) \\
	= - \sum_{\alpha\beta}\Lambda_{i\alpha} G^{'}_{\alpha\beta}\left(g(\boldsymbol{x})_{\alpha} g(\boldsymbol{x})_{\beta}\right)\chi_{\alpha\beta}(\boldsymbol{\mu}_{\alpha}^{\top}\boldsymbol{\mu}_{\beta})g(\boldsymbol{x})_{\beta} = - \sum_{\alpha}\Lambda_{i\alpha}\Delta x_{\alpha}\left(\boldsymbol{x}, \boldsymbol{\mu}\right).
\end{multline}
Reusing results from Appendix~\ref{appendix:Lyapunov_Hopfield} we find
\begin{equation}
	\kappa_{H}\sum_{i}\frac{\partial E_{H}\left(\boldsymbol{x}\right)}{\partial x_i}\dot{x}_i + \sum_{i}\frac{\partial \Delta E_{HK}\left(\boldsymbol{x}, \boldsymbol{\mu}\right)}{\partial x_i}\dot{x}_i = -\kappa_{H}\sum_{ij}\Lambda_{ij}\left(\sum_{k}W_{jk}g(\boldsymbol{x})_{k} - x_{j} + b_{j} + \frac{1}{\kappa_{H}}\Delta x_{j}\left(\boldsymbol{x}, \boldsymbol{\mu}\right)\right)\dot{x}_i.
\end{equation}
It is easy to see from equations of motion~(\ref{eq:Hopfield_Kuramoto_memory}) that term in brackets is $\dot{x}_{i}$ so we reproduce the first term in derivative of energy function~(\ref{eq:Hopfield_Kuramoto_energy_derivative}).

For the Kuramoto model we find
\begin{multline}
	\frac{\partial \Delta E_{HK}\left(\boldsymbol{x}, \boldsymbol{\mu}\right)}{\partial \boldsymbol{\mu}_i} =  - \frac{1}{2}\sum_{\alpha\beta}G_{\alpha\beta}\left(g(\boldsymbol{x})_{\alpha} g(\boldsymbol{x})_{\beta}\right)\chi^{'}_{\alpha\beta}(\boldsymbol{\mu}_{\alpha}^{\top}\boldsymbol{\mu}_{\beta}) \frac{\partial \left(\boldsymbol{\mu}_{\alpha}^{\top}\boldsymbol{\mu}_{\beta}\right)}{\partial \boldsymbol{\mu}_i} \\
	= - \frac{1}{2}\sum_{\alpha\beta}G_{\alpha\beta}\left(g(\boldsymbol{x})_{\alpha} g(\boldsymbol{x})_{\beta}\right)\chi^{'}_{\alpha\beta}(\boldsymbol{\mu}_{\alpha}^{\top}\boldsymbol{\mu}_{\beta})\left(\delta_{\alpha i}\boldsymbol{\mu}_{\beta} + \delta_{\beta i}\boldsymbol{\mu}_{\alpha}\right)\\
	= -\sum_{\beta}G_{i\beta}\left(g(\boldsymbol{x})_{i} g(\boldsymbol{x})_{\beta}\right)\chi^{'}_{i\beta}(\boldsymbol{\mu}_{i}^{\top}\boldsymbol{\mu}_{\beta})\boldsymbol{\mu}_{\beta} = -\delta \boldsymbol{\mu}_{i}(\boldsymbol{x}, \boldsymbol{\mu})
\end{multline}
This result allows us to rewrite equations of motion~(\ref{eq:Hopfield_Kuramoto_memory}) for oscillatory neurons in the following form
\begin{multline}
	\dot{\boldsymbol{\mu}}_{i} = \boldsymbol{\Omega}_{i}\boldsymbol{\mu}_i + \left(\boldsymbol{I} - \boldsymbol{\mu}_i\boldsymbol{\mu}_i^{\top}\right)\left(-\frac{\partial E_{K}}{\partial \boldsymbol{\mu}_i} + \frac{1}{\kappa_{K}}\delta \boldsymbol{\mu}_{i}(\boldsymbol{x}, \boldsymbol{\mu})\right) \\= \boldsymbol{\Omega}_{i}\boldsymbol{\mu}_i - \frac{1}{\kappa_{K}}\left(\boldsymbol{I} - \boldsymbol{\mu}_i\boldsymbol{\mu}_i^{\top}\right)\left(\kappa_{K}\frac{\partial E_{K}}{\partial \boldsymbol{\mu}_i} + \frac{\partial \Delta E_{HK}\left(\boldsymbol{x}, \boldsymbol{\mu}\right)}{\partial \boldsymbol{\mu}_i}\right) = \boldsymbol{\Omega}_{i}\boldsymbol{\mu}_i - \frac{1}{\kappa_{K}}\left(\boldsymbol{I} - \boldsymbol{\mu}_i\boldsymbol{\mu}_i^{\top}\right) \frac{\partial E_{HK}\left(\boldsymbol{x}, \boldsymbol{\mu}\right)}{\partial \boldsymbol{\mu}_i}.
\end{multline}
So we can simplify expression for time derivative of energy
\begin{equation}
	\sum_{i}\left(\frac{\partial E_{HK}\left(\boldsymbol{x}, \boldsymbol{\mu}\right)}{\partial \boldsymbol{\mu}_i}\right)^{\top}\dot{\mu}_i = - \frac{1}{\kappa_{K}}\sum_{i}\left(\frac{\partial E_{HK}\left(\boldsymbol{x}, \boldsymbol{\mu}\right)}{\partial \boldsymbol{\mu}_i}\right)^{\top}\left(\boldsymbol{I} - \boldsymbol{\mu}_i\boldsymbol{\mu}_i^{\top}\right) \frac{\partial E_{HK}\left(\boldsymbol{x}, \boldsymbol{\mu}\right)}{\partial \boldsymbol{\mu}_i} + \sum_{i}\left(\frac{\partial E_{HK}\left(\boldsymbol{x}, \boldsymbol{\mu}\right)}{\partial \boldsymbol{\mu}_i}\right)^{\top}\boldsymbol{\Omega}_{i}\boldsymbol{\mu}_i.
\end{equation}
The first term in the equation above reproduces the second term in the derivative of energy function~(\ref{eq:Hopfield_Kuramoto_energy_derivative}). Now, we need to prove that the second term is zero. For that we use result from Appendix~\ref{appendix:Lyapunov_Kuramoto} that gives us
\begin{equation}
	\sum_{i}\left(\frac{\partial E_{HK}\left(\boldsymbol{x}, \boldsymbol{\mu}\right)}{\partial \boldsymbol{\mu}_i}\right)^{\top}\boldsymbol{\Omega}_{i}\boldsymbol{\mu}_i = \sum_{i\beta} \left(\kappa_{K}\frac{\partial E_{K}}{\partial s_{i\beta}} - G_{i\beta}\left(g(\boldsymbol{x})_{i} g(\boldsymbol{x})_{\beta}\right)\chi^{'}_{i\beta}(\boldsymbol{\mu}_{i}^{\top}\boldsymbol{\mu}_{\beta})\right)\boldsymbol{\mu}_\beta^{\top}\boldsymbol{\Omega}_{i}\boldsymbol{\mu}_i. 
\end{equation}
Expression in brackets is symmetric with respect to $i$ and $\beta$ and if $\boldsymbol{\Omega}_{i}= \boldsymbol{\Omega}$ scalar product is skew-symmetric $\boldsymbol{\mu}_\beta^{\top}\boldsymbol{\Omega}\boldsymbol{\mu}_i = -\boldsymbol{\mu}_i^{\top}\boldsymbol{\Omega}\boldsymbol{\mu}_\beta$, so contraction is zero.

So we proved that the derivative of energy is given by the expression~(\ref{eq:Hopfield_Kuramoto_energy_derivative}) as claimed in the theorem. Derivative is non-increasing on trajectories since individual terms are quadratic forms with negative semidefinite matrices

\section{Hierarchical Hopfield-Kuramoto models}
\label{appendix:Models}
In Section~\ref{section:Hopfield-Kuramoto} we showed a general construction of the joint Hopfield-Kuramoto associative memory model. Here we will provide several deep architectures with interacting oscillatory neurons and threshold units. All models from this section have global energy function non-increasing on trajectories of joint Hopfield-Kuramoto model~(\ref{eq:Hopfield_Kuramoto_memory}). We do not provide separate proofs for each case since all of them can be reduced to Theorem~\ref{th:Lyapunov_Hopfield_Kuramoto} with particular choice of weights, activations and Kuramoto energy function $E_{K}$.

In all examples below lower index indicate layer. To present results in a more compact way we denote $\left(\boldsymbol{P}_{\boldsymbol{\mu}}\right)_{ij} = (\boldsymbol{I} - \boldsymbol{\mu}_{i}\boldsymbol{\mu}_{i}^{\top})I_{ij}$ and stack $i=1,\dots,N_k$ oscilatory neurons $\boldsymbol{\mu}_{i}\in\mathbb{R}^{D+1}$ in a single vector $\boldsymbol{\mu}^{\top} = \begin{pmatrix}\boldsymbol{\mu}^{\top}_1 & \dots & \boldsymbol{\mu}^{\top}_{N_K}\end{pmatrix} \in \mathbb{R}^{N\times(D+1)}$. So with slight abuse of notation we now use $\boldsymbol{\mu}_{i}$ to indicate the state of \textit{all} oscillatory neurons rather than the state of $i$-th oscillatory neuron as in Kuramoto associative memory~(\ref{eq:Kuramoto_memory}). In this notation equation~(\ref{eq:Kuramoto_memory}) with $\boldsymbol{\Omega}_i = \boldsymbol{\Omega}$ becomes $\dot{\boldsymbol{\mu}} = \boldsymbol{I}\otimes \boldsymbol{\Omega} \boldsymbol{\mu} - \boldsymbol{P}_{\mu}\frac{\partial E_{K}}{\partial \boldsymbol{\mu}}$.

In all cases we consider a network with $K$ combined threshold-oscillatory neurons. The state of the layer is a pair $(\boldsymbol{x}_{i} \in \mathbb{R}^{N_i}, \boldsymbol{\mu}_{i}\in \mathbb{R}^{N_i\times (D+1)})$.

\subsection{Networks with oscillatory lateral connections}
The most natural way to build hierarchical networks is to use threshold units for deep connections and oscillatory neurons for lateral connections. We will provide three examples that differ by activation function and structure of the linear layer. In all examples we use minimal coupling between oscillatory and threshold units.
\subsubsection{Fully connected, ReLU activation}
Equations of motion for layer $i=2,\dots,K-1$ are
\begin{equation}
	\label{eq:MLP_ReLU_1}
	\begin{aligned}
		\dot{\boldsymbol{x}}_i &= \sum_{p=\pm1}\boldsymbol{W}_{i i+p} {\sf ReLU}(\boldsymbol{x}_{i+p}) - \boldsymbol{x}_i + \boldsymbol{b}_{i}  + \frac{1}{\kappa_{H}} \boldsymbol{S}_{i} \odot \boldsymbol{\mathcal{G}}(\boldsymbol{\mu}_i, \boldsymbol{\mu}_i) {\sf ReLU}(\boldsymbol{x}_{i}),\\
		\dot{\boldsymbol{\mu}}_i &= \boldsymbol{I}\otimes \boldsymbol{\Omega} \boldsymbol{\mu}_i - \boldsymbol{P}_{\mu_i}\frac{\partial E_{K}}{\partial \boldsymbol{\mu}} + \boldsymbol{P}_{\mu_i}\frac{1}{\kappa_{K}} \left({\sf ReLU}(\boldsymbol{x}_{i}){\sf ReLU}(\boldsymbol{x}_{i})^\top \odot \boldsymbol{S}_{i}\right) \otimes \boldsymbol{I} \boldsymbol{\mu}_i.
	\end{aligned}
\end{equation}
where $\boldsymbol{W}_{ij}^\top = \boldsymbol{W}_{ij}$ are weights for feedforward and feedback connections, $\boldsymbol{S}_{i} = \boldsymbol{S}_{i}^{\top}$ are weights of lateral connections. For layer $1$ and $K$ one needs to omit terms with $\boldsymbol{x}_{-1}$ and $\boldsymbol{x}_{K+1}$ from the equation for threshold units.

Energy function reads
\begin{multline}
	E_{HK} = \sum_{i=1}^{K}\left(\boldsymbol{x}_i - \boldsymbol{b}_i\right)^{\top}{\sf ReLU}(\boldsymbol{x}_i) - \frac{1}{2}\sum_{i=1}^{K}\boldsymbol{1}^{\top}{\sf ReLU}^2(\boldsymbol{x}_i) - \sum_{i=1}^{K-1}{\sf ReLU}(\boldsymbol{x}_{i})^\top \boldsymbol{W}_{ii+1}{\sf ReLU}(\boldsymbol{x}_{i+1}) + E_{K} \\+ \frac{1}{2}\sum_{i=1}^{K}\left\|{\sf ReLU}(\boldsymbol{x}_{i}){\sf ReLU}(\boldsymbol{x}_{i})^\top \odot \boldsymbol{S}_{i} \odot \boldsymbol{\mathcal{G}}(\boldsymbol{\mu}_i, \boldsymbol{\mu}_i)\right\|_{F}^2.
\end{multline}

From equations of motion we see that layers $i$ and $i+1$ are coupled through interaction of threshold units. Threshold units interact with oscillatory units only within each layer, i.e., they form lateral connections. There is no direct interaction between oscillatory units of distinct layers as long as $E_K$ does not contain them.

\subsubsection{Convolutional, ReLU activation}
It is possible to replace dense layers with convolutional layers for both deep and lateral connections. Since the convolution layer is merely a structured linear layer, one only needs to slightly change interpretation and notation. For example, if network process images, within each layer in~(\ref{eq:MLP_ReLU_1}) index of neuron is triple $(h, w, c)$ -- heights, width and channel, so the state of threshold-oscillatory unit is a pair $(\boldsymbol{x}_{i} \in \mathbb{R}^{H_i\times W_i\times C_i}, \boldsymbol{\mu}_{i}\in \mathbb{R}^{H_i\times W_i\times C_i\times (D+1)})$. Next, convolution operation for both threshold and oscillatory unit has the same form
\begin{equation}
	x_{h,w,c} = \sum_{i=-k}^{k}\sum_{j=-k}^{k}\sum_{o=1}^{N_c} w_{i,j,o,c}y_{h+i,w+j,o},
\end{equation}
but for oscillatory units $x_{h,w,c}$ and $y_{h,w,c}$ are replaced by vectors $\boldsymbol{\mu}_{h,w,c}$, $\boldsymbol{\nu}_{h,w,c}$ since each oscillatory units has $D$ rotational degrees of freedom. Other operations are also straightforward. For example, Gram matrix will become
\begin{equation}
	\left(\boldsymbol{\mathcal{G}}(\boldsymbol{\mu}, \boldsymbol{\mu})\right)_{\left(h_1,w_1,c_1\right)\left(h_2,w_2,c_2\right)} = \boldsymbol{\mu}_{h_1,w_1,c_1}^{\top} \boldsymbol{\mu}_{h_2,w_2,c_2},
\end{equation}
and the gating mechanism will make parameters of the lateral convolution kernel given by matrix $\boldsymbol{S}$ explicitly space-dependent.

\subsubsection{Fully connected, softmax}
Dynamical system~(\ref{eq:MLP_ReLU_1}) will have Lyapunov function if we replace ${\sf ReLU}(\boldsymbol{x})$ by softmax function ${\sf sm}(\boldsymbol{x}) = \exp(\boldsymbol{x}) / \left(\boldsymbol{1}^\top\exp(\boldsymbol{x})\right)$. More specifically, Lyapunov function becomes
\begin{multline}
	E_{HK} = \sum_{i=1}^{K}\left(\boldsymbol{x}_i - \boldsymbol{b}_i\right)^{\top}{\sf sm}(\boldsymbol{x}_i) - \sum_{i=1}^{K}\log\left(\boldsymbol{1}^{\top}\exp(\boldsymbol{x}_i)\right) - \sum_{i=1}^{K-1}{\sf sm}(\boldsymbol{x}_{i})^\top \boldsymbol{W}_{ii+1}{\sf sm}(\boldsymbol{x}_{i+1}) + E_{K} \\+ \frac{1}{2}\sum_{i}\left\|{\sf sm}(\boldsymbol{x}_{i}){\sf sm}(\boldsymbol{x}_{i})^\top \odot \boldsymbol{S}_{i} \odot \boldsymbol{\mathcal{G}}(\boldsymbol{\mu}_i, \boldsymbol{\mu}_i)\right\|_{F}^2.
\end{multline}
It is also possible to implement attention mechanisms from \cite{ramsauer2020hopfield} as explained in \cite{krotov2020large}. It is also possible to use many other activation functions \cite{krotov2021hierarchical}.

\subsection{Networks with threshold lateral connections}
Oscillatory units can also be used to form deep connections. Here we provide an example with ${\sf ReLU}$ activations and fully-connected layers, but other activations can be used as well. Equations of motion for layer $i=2,\dots,K-1$ are
\begin{equation}
	\label{eq:MLP_ReLU_2}
	\begin{aligned}
		\dot{\boldsymbol{x}}_i &= \boldsymbol{W}_{i} {\sf ReLU}(\boldsymbol{x}_{i}) - \boldsymbol{x}_i + \boldsymbol{b}_{i} + \frac{1}{\kappa_{H}} \boldsymbol{S}_{i} \odot \boldsymbol{\mathcal{G}}(\boldsymbol{\mu}_i, \boldsymbol{\mu}_i) {\sf ReLU}(\boldsymbol{x}_{i}),\\
		\dot{\boldsymbol{\mu}}_i &= \boldsymbol{I}\otimes \boldsymbol{\Omega} \boldsymbol{\mu}_i - \sum_{p=\pm1}\boldsymbol{P}_{\mu_i}\left(\boldsymbol{W}_{ii+p}\odot{\sf ReLU}\left(\boldsymbol{\mathcal{G}}(\boldsymbol{\mu}_i, \boldsymbol{\mu}_{i+p})\right) \right)\otimes\boldsymbol{I}\boldsymbol{\mu}_{i+p} + \boldsymbol{P}_{\mu_i}\frac{1}{\kappa_{K}} \left({\sf ReLU}(\boldsymbol{x}_{i}){\sf ReLU}(\boldsymbol{x}_{i})^\top \odot \boldsymbol{S}_{i}\right) \otimes \boldsymbol{I} \boldsymbol{\mu}_i.
	\end{aligned}
\end{equation}
where $\boldsymbol{W}_{ij}^\top = \boldsymbol{W}_{ij}$ are weights for feedforward and feedback connections, $\boldsymbol{S}_{i} = \boldsymbol{S}_{i}^{\top}$ and $\boldsymbol{W}_{i} = \boldsymbol{W}_{i}^{\top}$ are weights of lateral connections. For layer $1$ and $K$ one needs to omit terms with $\boldsymbol{\mu}_{-1}$ and $\boldsymbol{\mu}_{K+1}$ from the equation for oscillatory units.

Energy function reads
\begin{multline}
	E_{HK} = \sum_{i=1}^{K}\left(\boldsymbol{x}_i - \boldsymbol{b}_i\right)^{\top}{\sf ReLU}(\boldsymbol{x}_i) - \frac{1}{2}\sum_{i=1}^{K}\boldsymbol{1}^{\top}{\sf ReLU}^2(\boldsymbol{x}_i) - \frac{1}{2}\sum_{i=1}^{K}{\sf ReLU}(\boldsymbol{x}_{i})^\top \boldsymbol{W}_{i}{\sf ReLU}(\boldsymbol{x}_{i})\\ + \frac{1}{2}\sum_{i=1}^{K}\left\|{\sf ReLU}(\boldsymbol{x}_{i}){\sf ReLU}(\boldsymbol{x}_{i})^\top \odot \boldsymbol{S}_{i} \odot \boldsymbol{\mathcal{G}}(\boldsymbol{\mu}_i, \boldsymbol{\mu}_i)\right\|_{F}^2 + \frac{1}{4} \sum_{i=1}^{K-1} \left\|\boldsymbol{W}_{ii+1}\odot{\sf ReLU}^2\left(\boldsymbol{\mathcal{G}}(\boldsymbol{\mu}_i, \boldsymbol{\mu}_{i+1})\right)\right\|_{F}^2
\end{multline}

Wee see that this time oscillatory neurons of layers $i$ and $i+1$ interact directly, whereas threshold units interact only within individual layers. Interaction of oscillatory units contain self-gating, so deep connection looks less natural than the ones formed by threshold units.

Dynamical system~(\ref{eq:MLP_ReLU_2}) can be extended on other activation functions and it is possible to use convolutional layers the same way as explained in the previous section.

\section{Description of fine-tuning experiments}
\label{appendix:Exp_description}
The code that reproduces our numerical results is available in \url{https://github.com/vlsf/HKmemory}. Trained models are available in \url{https://disk.yandex.ru/d/n5eQAPOSu4D4pw}.

To implement training of memory models we used JAX \cite{jax2018github}, Optax \cite{deepmind2020jax}, Equinox \cite{kidger2021equinox}, Diffrax \cite{kidger2021on}. NetworkX was used to generate Watts-Strogatz small world NetworkX \cite{SciPyProceedings_11}. For convenience we split our description into several parts.
\subsection{ODE solver}
ODE is solved with Tsitouras' 5/4 method, step size is adjusted with PID controller, integration interval is $[0, 1]$, initial time step $10^{-2}$, maximal number of steps allowed is $5000$, library used is \cite{kidger2021on}.
\subsection{Optimisation}
In all cases we use Lion optimizer \cite{chen2023symbolic}, with learning rate $10^{-4}$, exponential learning rate decay with rate $\gamma=0.5$ and $1000$ transition steps, number of weight updates is $9000$, batch size is $100$, number of train examples is $60000$, number of test examples is $10000$, library used in Optax \cite{deepmind2020jax}.
\subsection{Hopfield subnetwork}
Model is described in the equation~(\ref{eq:exp_model}), input is augmented with $300$ additional elements, MNIST images were normalised to $[0, 1]$ range, weight matrix is dense, the total number of parameters of the model is $\simeq 12\times 10^{5}$, classes were encoded as $2\boldsymbol{e}_i - 1$ where $\boldsymbol{e}_i\in\mathbb{R}^{10}$ is $i$-th column of identity matrix, training loss is $L_2$, and predicted classes are last $10$ components of $\boldsymbol{x}(1)$ vector, library used in Equinox \cite{kidger2021equinox}.
\subsection{Kuramoto subnetwork}
Model is described in the equation~(\ref{eq:exp_model}), input is augmented with $300$ additional elements, MNIST images were normalised to $[0, 1]$ range, weight matrices are sparse, sparsity patter in generated with Watts-Strogatz small world as implemented in NetworkX with $p=0.1$ and $k = 150, 250, 350, 450$, rounded number of parameters for Kuramoto subnetworks are given in Table~\ref{table:N_params_Kuramoto}, to encode MNIST digit into initial conditions we set $\boldsymbol{\mu}_{1} = \boldsymbol{e}_1$ and use two first component of the rest of oscillatory neurons to represent normalised pixel values as scalar products $\boldsymbol{\mu}_{1}^{\top} \boldsymbol{\mu}_{j}$ (see Appendix~\ref{appendix:scalar_products_encoding}), for non-associative training we use $\boldsymbol{\Omega} = 0$ and rotate vectors for each batch of data randomly during training, for associative training we use trainable $\boldsymbol{\Omega}$, on the fine-tuning stage only output of the Hopfield network directly contributed to the loss function.

\begin{table}[t]
\caption{Number of parameters for Kuramoto subnetworks.}
\centering
\begin{tabular}{lcc}
\toprule
$k$ & model size \\
\midrule
$150$ & $\simeq 17\times 10^{4}$ \\
$250$ & $\simeq 28 \times 10^{4}$\\
$350$ & $\simeq 39 \times 10^{4}$\\
$450$ & $\simeq 50 \times 10^{4}$\\
\bottomrule
\end{tabular}
\label{table:N_params_Kuramoto}
\end{table}

\end{document}